\DeclareMathAlphabet\mathbfcal{OMS}{cmsy}{b}{n}
\newcommand{\ket}[1]{\ensuremath{|#1\rangle}}
\newcommand{\bra}[1]{\ensuremath{\langle #1|}}
\newcommand{\proj}[1]{\ket{#1}\bra{#1}}
\newcommand{\id}{\mathbb{I}}
\newcommand{\MA}{{\bf E}}
\newtheorem{alemma}{Lemma}[section]
\newtheorem{atheorem}[alemma]{Theorem}
\newtheorem{result}{Theorem}
\newtheorem{observation}[result]{Observation}
\newtheorem{question}{Question}
\newtheorem{definition}{Definition}
\definecolor{nred}{rgb}{0.9,0.1,0.1}
\definecolor{nblack}{rgb}{0,0,0}
\definecolor{nblue}{rgb}{0.2,0.2,0.8}
\definecolor{ngreen}{rgb}{0.2,0.6,0.2}
\definecolor{ublue}{rgb}{0,0,0.5}
\definecolor{pur}{rgb}{0.75,0,0.75}
\definecolor{nngrn}{rgb}{0,0.5,0.5}
\definecolor{CitingColor}{rgb}{0,0.3,1}
\newcommand{\blu}{\color{nblue}}
\begin{document}
\title{Characterisation and fundamental limitations of irreversible stochastic steering distillation}

\author{Chung-Yun Hsieh}
\affiliation{H. H. Wills Physics Laboratory, University of Bristol, Tyndall Avenue, Bristol, BS8 1TL, UK}
\author{Huan-Yu Ku}
\email{huan-yu.ku@oeaw.ac.at}
\affiliation{Department of Physics, National Taiwan Normal University, Taipei 11677, Taiwan}
\affiliation{Faculty of Physics, University of Vienna, Boltzmanngasse 5, 1090 Vienna, Austria}
\affiliation{Institute for Quantum Optics and Quantum Information (IQOQI), Austrian Academy of Sciences, Boltzmanngasse 3, 1090 Vienna, Austria}
\author{Costantino Budroni}
\affiliation{Department of Physics ``E. Fermi'' University of Pisa, Largo B. Pontecorvo 3, 56127 Pisa, Italy}

\date{\today}

\begin{abstract}
Steering resources, central for quantum advantages in one-sided device-independent quantum information tasks, can be enhanced via local filters.
Recently, reversible steering conversion under local filters has been fully characterised.
Here, we solve the problem in the {\em irreversible} scenario, which 
leads to a complete understanding of stochastic steering distillation.
This result also provides an operational interpretation of the max-relative entropy  as the optimal filter success probability. 
We further show that all steering measures can be used to quantify measurement incompatibility in certain stochastic steering distillation scenarios.
Finally, for a broad class of steering robustness measures, we show that their maximally achievable values in stochastic steering distillation are always {\em upper bounded} by different types of incompatibility robustness measures.
Hence, measurement incompatibility sets the fundamental limitations for stochastic steering distillation.
\end{abstract}

\maketitle

\section{introduction}
For each quantum information task, there is a specific quantum resource necessary for its successful 
implementation \cite{ChitambarRMP2019}.
For instance, perfect teleportation~\cite{Bennett93} and super-dense coding~\cite{Bennett92} require maximally entangled states. 
Also, perfect quantum memories and quantum communication protocols need noiseless identity channels~\cite{wilde_2017}. 
However, physical systems are usually noisy and not perfectly under control. 
This leads to the question of how these resources can be manipulated and enhanced, i.e., the problem of resource {\em distillation}.

Since the inception of entanglement distillation~\cite{Bennett1996,Horodecki1999}, the idea of distillation has 
been applied to various physical phenomena, such as quantum communication~\cite{Regula2021,Regula2021PRL}, 
thermodynamics~\cite{Lostaglio2019,Purity-review}, informativeness of measurements~\cite{Paul2022}, 
nonlocality~\cite{Gisin1996PLA,Forster2009PRL,Naik2023PRL}, and quantum steering~\cite{Nery2020,Liu2022,Ku2022NC,Ku2023}. 
In particular, quantum steering is inextricably linked to one-sided device-independent quantum information
processing~\cite{ShinLiang2016PRL,Skrzypczyk2018,Chen2021robustselftestingof,Branciard2012, Piani2015,Zhao2020,Ku2022PRXQ}.
However, if compared with other quantum resources, steering distillation has been studied only very 
recently~\cite{Nery2020,Ku2022NC,Ku2023,Liu2022}.
Moreover, {\em stochastic} steering distillation is using local filters to enhance quantum steering, which is not only experimentally feasible~\cite{Nery2020}, but also has strong implications for incompatible measurements~\cite{Ku2022NC,Ku2023}.
Still, the question of characterising \textit{irreversible} steering conversion remains open.
Up to now, the most timely characterisation~\cite{Nery2020,Ku2022NC,Ku2023} addresses the {\em reversible} case, 
in which one can undo the local filter with non-vanishing success probability.
It is then essential to characterise local filters that {\em cannot} be undone in order to completely understand 
stochastic steering distillation.

In this work, we prove the first necessary and sufficient characterisation of irreversible steering distillation under local filters.
Furthermore, we provide a systematic way to quantify measurement incompatibility~\cite{Otfried2021Rev} by the optimally distillable steering. 
Finally, by uncovering the relation between different types of robustness measures of steering and incompatibility, we show that incompatibility sets the fundamental limitations for stochastic steering distillation.

\section{Preliminary Notions}
We start with specifying notations.
For a {\em non-full-rank} positive operator $O\ge0$, one cannot define its inverse in general.
While one can effectively do so when we restrict to its {\em support} ${\rm supp}(O)$, 
which is the subspace spanned by its eigenstates with strictly positive eigenvalues (it is also called {\em range} and denoted by ${\rm ran}(O)$). 
Then, $O$ is effectively full-rank in the subspace ${\rm supp}(O)$, and its inverse can be well-defined as given by $O|_{{\rm supp}(O)}^{-1}\oplus 0_{\perp {\rm supp}(O)}$.
A conventional abbreviation in the literature is
$
O^{-1}\coloneqq O|_{{\rm supp}(O)}^{-1}\oplus 0_{\perp {\rm supp}(O)}.
$

\subsection{Quantum Steering}

As an intermediate quantum resource between nonlocality and entanglement, quantum steering refers to one agent ($A$) remotely preparing states for a spatially separated agent ($B$) by $A$'s local measurement acting on their shared bipartite state $\rho_{AB}$ with classical communication to $B$~\cite{Wiseman2007PRL,Cavalcanti2016,UolaRMP2020,XiangPRXQ2022} (see Fig.~\ref{Fig:steering} for a schematic illustration).
Formally, $A$'s measurements are described by a set of {\em positive operator-valued measures} (POVMs)~\cite{QIC-book} ${\bf E}\coloneqq\{E_{a|x}\}_{a,x}$ such that $E_{a|x}\ge0$ for every $a,x$ and $\sum_{a}E_{a|x} = \id_A$ for every $x$, where $\id_A$ is the identity operator in the system $A$.
We call ${\bf E}$ a {\em measurement assemblage}.
After $A$ locally measures $\rho_{AB}$, $B$ obtains the post-measurement (un-normalised) state 
\begin{align}
\sigma_{a|x}\coloneqq{\rm tr}_A[(E_{a|x}\otimes\id_B)\rho_{AB}].
\end{align}
The collection of un-normalised states ${\bm\sigma}\coloneqq\{\sigma_{a|x}\}_{a,x}$ is called a {\em state assemblage}.
Note that $\rho_{\bm\sigma}\coloneqq\sum_a\sigma_{a|x}$ is a state independent of $x$, which is due to the no-signalling condition.

The {\em classical} cases correspond to some state assemblage ${\bm\tau}=\{\tau_{a|x}\}_{a,x}$ that can be simulated by a {\em local-hidden-state} (LHS) model:
\begin{align}
\tau_{a|x} = \sum_\lambda P(\lambda)P(a|x,\lambda)\rho_\lambda,
\end{align}
where $\{P(\lambda)\}_\lambda,\{P(a|x,\lambda)\}_{a,x,\lambda}$ are (conditional) probability distributions, and $\{\rho_\lambda\}_\lambda$ are pre-existing states.
When a state assemblage ${\bm\sigma}$ does not admit a LHS model, denoted as \mbox{${\bm\sigma}\notin{\bf LHS}$,}  we say that ${\bm\sigma}$ is {\em steerable}.

\subsection{Measurement Incompatibility}
Measurement incompatibility is a key ingredient of many quantum phenomena such as uncertainty relations ~\cite{Busch2014RMP,Otfried2021Rev}, Bell nonlocality~\cite{Wolf2009PRL,QuintinoPRL2019,Zhao2023Optica},  contextuality~\cite{BudroniRMP2022}, as well as steering \cite{UolaRMP2020}. 
To formally define measurement incompatibility, we say a measurement assemblage ${\bf M}=\{M_{a|x}\}_{a,x}$ is {\em jointly measurable}, or {\em compatible}, if there exists a single POVM $\{G_\lambda\}_\lambda$ and conditional probability distributions $\{P(a|x,\lambda)\}_{a,x,\lambda}$ such that
\begin{align}
M_{a|x} = \sum_\lambda P(a|x,\lambda)G_\lambda.
\end{align}
Physically, they can be simulated by a single measurement $\{G_\lambda\}_\lambda$ with classical post-processing. 
For a measurement assemblage ${\bf E}$, we write ${\bf E}\in{\bf JM}$ whenever it is jointly measurable. 
Otherwise, it is {\em incompatible} and denoted by ${\bf E}\notin{\bf JM}$.

\begin{figure}
\scalebox{1.0}{\includegraphics{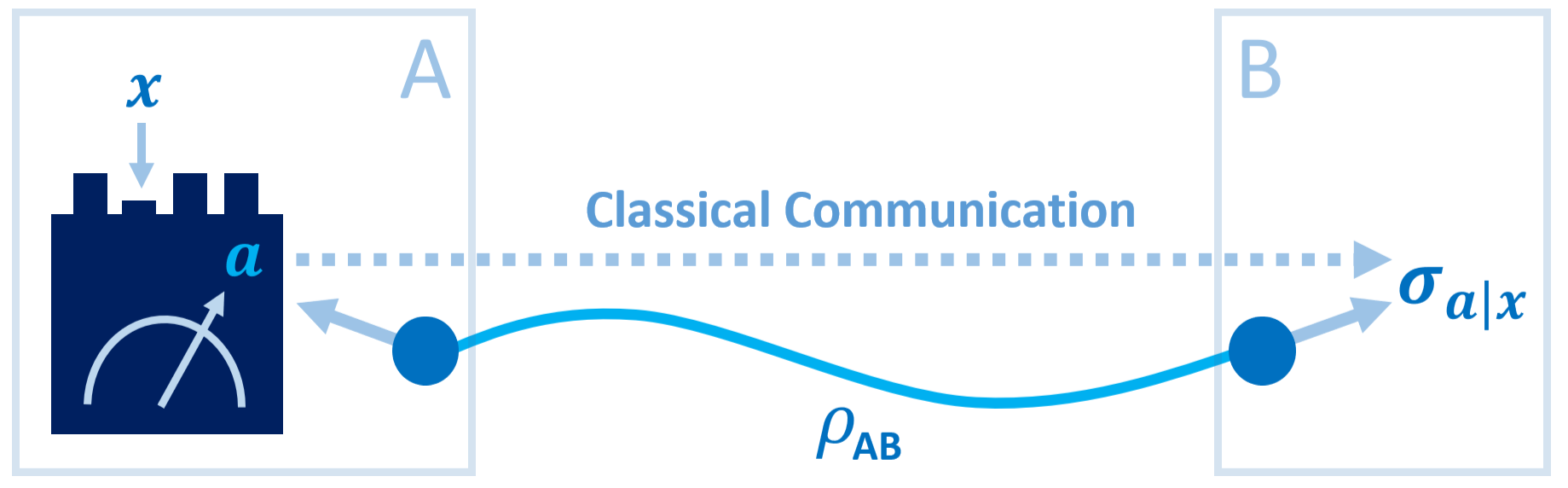}}
\caption{
{\bf A schematic interpretation of quantum steering.}
Quantum steering refers to one party $A$ remotely preparing states for a spatially separated party $B$ by some local measurement in $A$ (acting on the shared state $\rho_{AB}$) and classical communication from $A$ to $B$. 
}
\label{Fig:steering}
\end{figure}

\section{Main Results}

\subsection{Steering Distillation Under Irreversible Local Filters}
We first recall the notion of {\em steering-equivalent observable} (SEO)~\cite{Uola2015PRL,Kiukas2017PRA}, which mathematically connects quantum steering and measurement incompatibility.
Given a state assemblage ${\bm\sigma}$, its SEO is uniquely defined by
\begin{align}
{\bf B}^{({\bm\sigma})}\coloneqq\left\{\sqrt{\rho_{\bm\sigma}}^{\;-1}\sigma_{a|x}\sqrt{\rho_{\bm\sigma}}^{\;-1}\right\}_{a,x},
\end{align}
which are POVMs in the space ${\rm supp}(\rho_{\bm\sigma})$.
As shown in Refs.~\cite{Uola2015PRL,Kiukas2017PRA},
{\em ${\bm\sigma}\notin{\bf LHS}$ if and only if ${\bf B}^{({\bm\sigma})}\notin{\bf JM}$.}

As previously mentioned, only the agent $B$ is trusted in a steering experiment.
According to the one-sided device-independent perspective, only the system of $B$ is characterised. 
Hence, it is essential to understand how the local agent $B$ can manipulate state assemblages with their trusted devices.
It turns out that the SEO provides a clean way to {\em classify} steering assemblages under filter operations locally implemented by $B$.
More precisely, two state assemblages have the same SEO (up to a unitary degree of freedom) if and only if they can be transformed to each other by local filter operations~\cite{Ku2022NC}.
In other words, there exist local filters in {\em both} directions, and, probabilistically, one is able to {\em reverse} the effect of local filters.
The question, thus, remains open of how to characterise irreversible local filter transformations.
That is, can we compare state assemblages by the actions of local filters that are {\em irreversible} even {\em probabilistically}? 
To this end, we introduce the {\em SEO ordering}:
\begin{definition}
For state assemblages ${\bm\sigma}$ and ${\bm\tau}$, we write
\begin{align}
{\bm\sigma}\succ_{\rm SEO}{\bm\tau}
\end{align}
if and only if there is a unitary $U$ with \mbox{${\rm supp}(\rho_{\bm\tau})\subseteq{\rm supp}(U\rho_{\bm\sigma}U^\dagger)$} such that
\begin{align}\label{eq:def_>}
{\bm\tau} = \sqrt{\rho_{\bm\tau}} U {\bf B}^{({\bm\sigma})}U^\dagger\sqrt{\rho_{\bm\tau}}.
\end{align}
\end{definition}
In other words, ${\bm\sigma}\succ_{\rm SEO}{\bm\tau}$ whenever ${\bm\tau}$ can be mathematically induced by the SEO of ${\bm\sigma}$. 
One can check that $\succ_{\rm SEO}$ is a {\em preorder}, i.e., a reflexive and transitive homogeneous relation.
Also, the SEO ordering generalises the notion of {\em SEO equivalence class}, introduced in Ref.~\cite{Ku2022NC}, to the one-way, asymmetric regime.
More precisely, ${\bm\sigma},{\bm\tau}$ are {\em SEO equivalent}, denoted by ${\bm\sigma}\sim_{\rm SEO}{\bm\tau}$, if and only if there exists some unitary $U$ such that $U{\bf B}^{({\bm\sigma})}U^\dagger={\bf B}^{({\bm\tau})}.$
It is then straightforward to see that
${\bm\sigma}\sim_{\rm SEO}{\bm\tau}$ {\em if and only if }
${\bm\sigma}\succ_{\rm SEO}{\bm\tau}$ and ${\bm\tau}\succ_{\rm SEO}{\bm\sigma}.$

An immediate question now is whether this definition is operationally relevant.
As we shall demonstrate later, this mathematical notion is {\em operationally equivalent to} the following type of {\em local filter} operations:
\begin{align}
\sigma_{a|x}\mapsto \frac{K\sigma_{a|x}K^\dagger}{p_{\rm succ}},
\end{align}
where the operator $K$ with the condition $K^\dagger K\le\id$ describes the local filter operation $(\cdot)\mapsto K(\cdot)K^\dagger$ by a single Kraus operator, and $p_{\rm succ}\coloneqq{\rm tr}\left(K\rho_{\bm\sigma}K^\dagger\right)$ is the probability for a successful filter with the input ${\bm\sigma}$. 
We denote the set of all such local filters, i.e., those with a single Kraus operator, as ${\rm LF_1}$~\cite{Ku2022NC}. 
Conditioned on the successful filtering outcomes, \mbox{$\tau_{a|x} = K\sigma_{a|x}K^\dagger/p_{\rm succ}$} is the state assemblage that one obtains.
When there exists at least one such local filter achieving this transformation, with {\em non-vanishing} success probability, we write ${\bm\sigma}\xrightarrow{\rm LF_1}{\bm\tau}$.
Also, let $p_{\rm succ}^{\rm max}({\bm\sigma}\xrightarrow{\rm LF_1}{\bm\tau})$ be the highest success probability among all such ${\rm LF_1}$ transformations.

It is helpful, at this point, to recall the definition of {\em max-relative entropy}~\cite{Datta2009}.
For two states $\rho,\eta$, define
\begin{align}
D_{\rm max}(\eta\,\|\,\rho)\coloneqq\log_2\min\{\lambda\ge0\,|\,\eta\le\lambda \rho \}
\end{align}
if \mbox{${\rm supp}(\eta)\subseteq{\rm supp}(\rho)$;} otherwise, define $D_{\rm max}(\eta\,\|\,\rho)\coloneqq\infty$.
Then we have the following result:

\begin{result}\label{Result:LF1=SEO-ordering}
Let ${\bm\sigma},{\bm\tau}$ be state assemblages.
Then 
\begin{align}
{\bm\sigma}\succ_{\rm SEO}{\bm\tau}\quad\text{if and only if}\quad{\bm\sigma}\xrightarrow{\rm LF_1}{\bm\tau}.
\end{align}
When such a local filter transformation exists, we have
\begin{align}\label{Eq:Psucc_Dmax}
p_{\rm succ}^{\rm max}({\bm\sigma}\xrightarrow{\rm LF_1}{\bm\tau})=\sup_{U\in\mathcal{U}({\bm\sigma}\succ_{\rm SEO}{\bm\tau})}2^{-D_{\rm max}\left(\rho_{\bm\tau}\,\|\,U\rho_{\bm\sigma}U^\dagger\right)},
\end{align}
where $\mathcal{U}({\bm\sigma}\succ_{\rm SEO}{\bm\tau})$ is the set of all unitary operations consistent with a valid decomposition in Eq.~\eqref{eq:def_>}.
\end{result}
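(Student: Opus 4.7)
The plan is to establish the equivalence in two stages and read off the probability formula as a direct consequence of the construction. The key technical tool is the polar decomposition of $K\sqrt{\rho_{\bm\sigma}}$, which produces the unitary required by the SEO ordering; conversely, a valid unitary immediately yields a candidate filter whose normalisation constraint $K^\dagger K \le \id$ is precisely the $D_{\rm max}$ inequality.

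For the direction ${\bm\sigma}\xrightarrow{\rm LF_1}{\bm\tau}\Rightarrow{\bm\sigma}\succ_{\rm SEO}{\bm\tau}$, I would start from a filter $K$ with $p_{\rm succ} = \tr(K\rho_{\bm\sigma}K^\dagger) > 0$ and use the identity $(K\sqrt{\rho_{\bm\sigma}})(K\sqrt{\rho_{\bm\sigma}})^\dagger = p_{\rm succ}\,\rho_{\bm\tau}$ to write
\begin{align}
K\sqrt{\rho_{\bm\sigma}} = \sqrt{p_{\rm succ}}\,\sqrt{\rho_{\bm\tau}}\,\widetilde W,
\end{align}
with $\widetilde W$ a partial isometry whose final space is ${\rm supp}(\rho_{\bm\tau})$ and whose initial space $\mathcal V$ lies inside ${\rm supp}(\rho_{\bm\sigma})$. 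Substituting $\sigma_{a|x} = \sqrt{\rho_{\bm\sigma}}\,{\bf B}^{({\bm\sigma})}_{a|x}\sqrt{\rho_{\bm\sigma}}$ immediately gives $\tau_{a|x} = \sqrt{\rho_{\bm\tau}}\,\widetilde W\,{\bf B}^{({\bm\sigma})}_{a|x}\,\widetilde W^\dagger\,\sqrt{\rho_{\bm\tau}}$. I would then extend $\widetilde W$ to a unitary $U$ that agrees with $\widetilde W$ on $\mathcal V$ and sends $\mathcal V^\perp\cap{\rm supp}(\rho_{\bm\sigma})$ into ${\rm supp}(\rho_{\bm\tau})^\perp$ (such $U$ exists because the relevant dimensions match). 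A short block-decomposition argument — using the fact that any contribution supported in $\mathcal V^\perp$ is annihilated after left/right multiplication by $\sqrt{\rho_{\bm\tau}}$ — lets me replace $\widetilde W$ by $U$ without changing $\tau_{a|x}$, and the support condition ${\rm supp}(\rho_{\bm\tau})\subseteq{\rm supp}(U\rho_{\bm\sigma}U^\dagger)$ follows at once from $U(\mathcal V)={\rm supp}(\rho_{\bm\tau})$.

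For the converse and the explicit probability formula, given any $U\in\mathcal{U}({\bm\sigma}\succ_{\rm SEO}{\bm\tau})$ I would try the ansatz $K = \sqrt{p_{\rm succ}}\,\sqrt{\rho_{\bm\tau}}\,U\,\sqrt{\rho_{\bm\sigma}}^{\;-1}$, extended by zero outside ${\rm supp}(\rho_{\bm\sigma})$. A direct check verifies $K\sigma_{a|x}K^\dagger/p_{\rm succ} = \tau_{a|x}$ and, using the support condition, $\tr(K\rho_{\bm\sigma}K^\dagger) = p_{\rm succ}$. The only nontrivial requirement, $K^\dagger K\le\id$, becomes after sandwiching by $\sqrt{\rho_{\bm\sigma}}$ and conjugating by $U$ the operator inequality $p_{\rm succ}\,\rho_{\bm\tau}\le U\rho_{\bm\sigma}U^\dagger$, which by definition of $D_{\rm max}$ is equivalent to $p_{\rm succ}\le 2^{-D_{\rm max}(\rho_{\bm\tau}\|U\rho_{\bm\sigma}U^\dagger)}$. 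Optimising over $U\in\mathcal U$ gives the ``$\ge$'' direction in Eq.~\eqref{Eq:Psucc_Dmax}. The matching upper bound follows by feeding an arbitrary filter $K$ back through the polar decomposition of the first step: the resulting $U\in\mathcal U$ satisfies $\sqrt{\rho_{\bm\sigma}}\,K^\dagger K\,\sqrt{\rho_{\bm\sigma}} = p_{\rm succ}\,U^\dagger\rho_{\bm\tau}U$, so $K^\dagger K\le\id$ again forces $p_{\rm succ}\le 2^{-D_{\rm max}(\rho_{\bm\tau}\|U\rho_{\bm\sigma}U^\dagger)}$.

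The principal obstacle I anticipate is careful handling of supports and kernels. The partial isometry $\widetilde W$ from the polar decomposition is generally not isometric on all of ${\rm supp}(\rho_{\bm\sigma})$, so one cannot simply set $U=\widetilde W$ there; the unitary extension must instead be chosen so that the extra directions are harmless after being sandwiched by $\sqrt{\rho_{\bm\tau}}$. Similarly, using $U^\dagger\rho_{\bm\tau}U = \widetilde W^\dagger\rho_{\bm\tau}\widetilde W$ in the probability-bound step relies on the structural identity $U(\mathcal V)={\rm supp}(\rho_{\bm\tau})$ together with the support inclusion built into $\succ_{\rm SEO}$. Once these subtleties are settled, the remaining computations are essentially algebra.
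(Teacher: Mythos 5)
Your proposal is correct and takes essentially the same route as the paper's proof: the ${\rm LF_1}\Rightarrow{}\succ_{\rm SEO}$ direction rests on the polar decomposition of $K\sqrt{\rho_{\bm\sigma}}/\sqrt{p_{\rm succ}}$ (the paper obtains a genuine unitary directly from the polar decomposition theorem, whereas you extend the partial isometry by hand, but the content is identical), and the converse together with Eq.~\eqref{Eq:Psucc_Dmax} comes from the same ansatz $K\propto\sqrt{\rho_{\bm\tau}}\,U\sqrt{\rho_{\bm\sigma}}^{\,-1}$, whose contraction condition $K^\dagger K\le\id$ is exactly $p_{\rm succ}\,\rho_{\bm\tau}\le U\rho_{\bm\sigma}U^\dagger$. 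Your extra care with the partial-isometry extension and the support bookkeeping is sound and merely makes explicit a step the paper delegates to the polar decomposition theorem.
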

We detail the proof in Appendix~\ref{App:Proof-Result:LF1=SEO-ordering}. 
Also, see Fig.~\ref{Fig:Result1} for a schematic illustration.

As shown in Ref.~\cite{Ku2023}, optimal stochastic steering distillation can be achieved by using ${\rm LF_1}$ --- since no general local filter can distil more steerability than the optimal ${\rm LF_1}$ filter.
From this perspective, it is general enough to focus on ${\rm LF_1}$ for stochastic steering distillation.
Consequently, one can view Theorem~\ref{Result:LF1=SEO-ordering} as a complete characterisation of stochastic steering distillation --- since it provides a complete and analytical characterisation of irreversible steering conversion under ${\rm LF_1}$.
See also Observation~\ref{Obs} for further discussions.

As a direct corollary, Theorem~\ref{Result:LF1=SEO-ordering} reproduces the main result in Ref.~\cite{Ku2022NC} by considering reversible ${\rm LF_1}$; namely, one can check that: {\em ${\bm\sigma}\sim_{\rm SEO}{\bm\tau}$ $\Leftrightarrow$ ${\bm\sigma}\xrightarrow{\rm LF_1}{\bm\tau}$ $\&$ ${\bm\tau}\xrightarrow{\rm LF_1}{\bm\sigma}$ $\Leftrightarrow$ ${\bm\sigma}\succ_{\rm SEO}{\bm\tau}$ and ${\rm supp}(\rho_{\bm\sigma}),{\rm supp}(\rho_{\bm\tau})$ have the same dimension.}
In Ref.~\cite{Ku2022NC}, state assemblages are classified based on mutual convertibility under ${\rm LF_1}$; here, Theorem~\ref{Result:LF1=SEO-ordering} upgrades the steering classification into a preorder among state assemblages based on the most general conversion under ${\rm LF_1}$.

Rather unexpectedly, Theorem~\ref{Result:LF1=SEO-ordering} provides a new operational way to interpret max-relative entropy in a steering experiment.
Equation~\eqref{Eq:Psucc_Dmax} implies that max-relative entropy actually carries the information of the success probability for local filter transformations ${\bm\sigma}\xrightarrow{\rm LF_1}{\bm\tau}$.
This also explains the importance of the support condition from the success probability perspective.
In fact, the support condition implies a no-go result: {\em It is impossible to have ${\bm\sigma}\xrightarrow{\rm LF_1}{\bm\tau}$ if the rank of $\rho_{\bm\tau}$ is strictly higher than the one of $\rho_{\bm\sigma}$} --- since $D_{\rm max}\left(\rho_{\bm\tau}\,\middle\|\,U\rho_{\bm\sigma}U^\dagger\right)=\infty$ for all unitary $U$ in this case.

As a final remark, note that the formula of optimal success probability, i.e., Eq.~\eqref{Eq:Psucc_Dmax}, has a {\em prerequisite}: namely, we first need to know that such an ${\rm LF_1}$ filter {\em exists}.
For instance, local filters cannot create steerability when state assemblage is unsteerable. 
Hence, we do have $\rho_{\bm\sigma}$ and $\rho_{\bm\tau}$ with finite-valued max-relative entropy that no ${\rm LF_1}$ filter can convert $\bm{\sigma}\in {\bf LHS}$ to $\bm{\tau} \notin {\bf LHS}$.
Consequently, it is {\em insufficient} to merely check the value of max-relative entropy if one wants to know whether a local filter transformation exists or not.

\begin{figure}
\scalebox{1.0}{\includegraphics{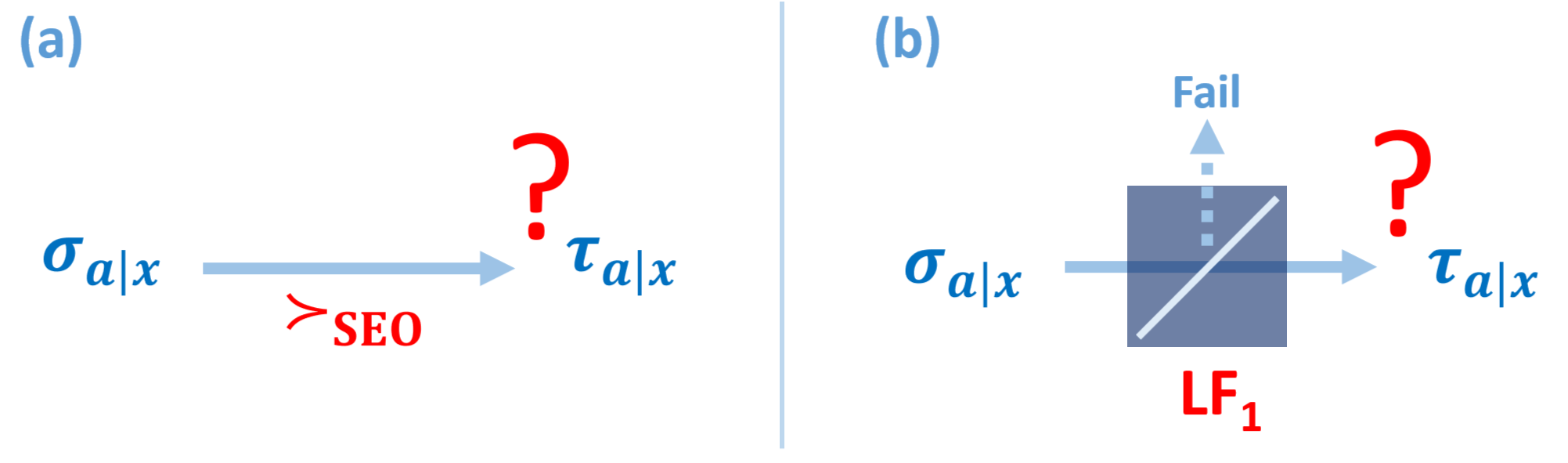}}
\caption{
{\bf Schematic interpretation of Theorem~\ref{Result:LF1=SEO-ordering}.}
Theorem~\ref{Result:LF1=SEO-ordering} considers two fundamental quantum information scenarios in a steering setup, which are ordering and one-way convertibility.
(a) To characterise and compare different steering resources beyond classification, we need an ordering structure among all state assemblages. This can be captured by the SEO ordering introduced in this work.
(b) An one-way convertibility problem aims to determine the possibility of transforming one object into another.
In this work, we focus on transformations under local filters in ${\rm LF_1}$, which are key ingredients in stochastic steering distillation~\cite{Ku2022NC,Ku2023}.
}
\label{Fig:Result1}
\end{figure}

\subsection{Examples}
As an illustrative example, consider a qubit-qutrit setting with the initial state $\rho_{AB}^{(v)} = v\proj{\phi_+^{\rm qubit}} + \mbox{(1-v)}\frac{\id_A}{2}\otimes\proj{2}$, where $\ket{\phi_+^{\rm qubit}}\coloneqq(\ket{00}+\ket{11})/\sqrt{2}$ is maximally entangled in the 'effective' qubit spanned by $\ket{0},\ket{1}$ in the qutrit $B$.
Suppose the initial state assemblage is
\mbox{$
\sigma^{(v)}_{a|x} = {\rm tr}_A\left[\left(E_{a|x}^{\rm Pauli}\otimes\id_B\right)\rho_{AB}^{(v)}\right],
$}
with
$
E_{0|0}^{\rm Pauli} = \proj{0} = \id_A-E_{1|0}^{\rm Pauli},
E_{0|1}^{\rm Pauli} = \proj{+} = \id_A-E_{1|1}^{\rm Pauli},
$
and $\ket{+}\coloneqq(\ket{0}+\ket{1})/\sqrt{2}$.
Namely, $E_{a|x}^{\rm Pauli}$'s are projective measurements corresponding to Pauli $X$ and $Z$ in the qubit system $A$.
By construction, the steerability of ${\bm\sigma}^{(v)}$ can be arbitrarily weak when $v\to0$.
Now, consider the ${\rm LF_1}$ filter $K = \proj{0}+\proj{1}$.
Then direct computation shows that, after this local filter and post-selection, we obtain
$
\sigma^{{\rm final}}_{a|x} = E_{a|x}^{\rm Pauli}/2,
$
which is maximally steerable in a two-qubit setting.
Note that this filter operation is not reversible, since ${\rm LF_1}$ cannot extend the final state assemblage to the degree of freedom spanned by $\ket{2}$ in $B$.
 This demonstrates how to use a genuinely irreversible ${\rm LF_1}$ filter to distil steering.

\section{Applications}

\subsection{Quantifying Measurement Incompatibility by Steering Distillation}
As an application, we show that steering distillation under local filters in ${\rm LF_1}$ can be used to {\em quantify} measurement incompatibility.
To this end, we introduce the {\em steering-induced incompatibility measure}:
\begin{equation}\label{Eq:steering-induced income measure}
I_S\left(\MA\right)\coloneqq\sup\left\{S\left({\bm\tau}\right)\,\middle|\,{\bm\sigma}\xrightarrow{\rm LF_1}{\bm\tau}, {\bf B}^{({\bm\sigma})} = \MA\right\},
\end{equation}
where the maximisation is taken over every state assemblage ${\bm\sigma}$ whose SEO is identical to $\MA$, and all other state assemblages that can be reached by them via ${\rm LF_1}$. 
When $S$ is a faithful steering measure (i.e., $S({\bm\sigma})=0$ if and only if ${\bm\sigma}\in{\bf LHS}$), it is not difficult to check that \mbox{$I_S\left(\MA\right)=0$} if and only if $\MA\in{\bf JM}$.
Adopting the resource theory settings considered in Refs.~\cite{Gallego2015PRX,Paul2019PRL}, we prove the following result:
\begin{result}\label{Results:IS is income monotone}
$I_{S}$ is an incompatibility monotone if $S$ is a steering monotone.
\end{result}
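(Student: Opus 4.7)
The plan is to verify the two defining properties of an incompatibility monotone within the Gallego-Aolita/Paul resource-theoretic framework \cite{Gallego2015PRX,Paul2019PRL}: (i) $I_S(\MA)=0$ whenever $\MA\in{\bf JM}$, and (ii) $I_S(\Lambda[\MA])\leq I_S(\MA)$ for every compatibility-preserving operation $\Lambda$ on measurement assemblages.

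For (i), if $\MA\in{\bf JM}$ then every $\bm{\sigma}$ with ${\bf B}^{(\bm{\sigma})}=\MA$ satisfies $\bm{\sigma}\in{\bf LHS}$ by the Uola-Kiukas equivalence recalled in the text. A single-Kraus filter $K$ sends an LHS decomposition $\sigma_{a|x}=\sum_\lambda P(\lambda)P(a|x,\lambda)\rho_\lambda$ to another LHS decomposition with hidden states $K\rho_\lambda K^\dagger/{\rm tr}(K\rho_\lambda K^\dagger)$ and re-weighted distribution $\tilde P(\lambda)\propto P(\lambda){\rm tr}(K\rho_\lambda K^\dagger)$, so every $\bm{\tau}$ reachable via $\mathrm{LF}_1$ from such a $\bm{\sigma}$ lies in ${\bf LHS}$ and satisfies $S(\bm{\tau})=0$ by faithfulness of the monotone $S$; the supremum in Eq.~\eqref{Eq:steering-induced income measure} is therefore zero.

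For (ii), fix a compatibility-preserving $\Lambda$, set $\MA':=\Lambda[\MA]$, and take any feasible pair $(\bm{\sigma}',\bm{\tau}')$ for $I_S(\MA')$: thus ${\bf B}^{(\bm{\sigma}')}=\MA'$ and $\bm{\sigma}'\xrightarrow{\mathrm{LF}_1}\bm{\tau}'$ through some Kraus operator $K$. I lift this to feasible data for $I_S(\MA)$ by defining
\begin{align*}
\sigma_{a|x}:=\sqrt{\rho_{\bm{\sigma}'}}\,E_{a|x}\,\sqrt{\rho_{\bm{\sigma}'}},
\end{align*}
so that ${\bf B}^{(\bm{\sigma})}=\MA$, $\rho_{\bm{\sigma}}=\rho_{\bm{\sigma}'}$, and the same filter $K$ realises $\bm{\sigma}\xrightarrow{\mathrm{LF}_1}\bm{\tau}$ with $\tau_{a|x}=K\sigma_{a|x}K^\dagger/p_{\rm succ}$ at identical success probability. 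The crux is that $\bm{\tau}'$ then coincides with the image of $\bm{\tau}$ under the steering-free one-way operation naturally induced by $\Lambda$ at the state-assemblage level: for classical post-processing $E'_{a'|x}=\sum_a p(a'|a,x)E_{a|x}$, for example, the SEO identity $\sigma'_{a'|x}=\sqrt{\rho_{\bm{\sigma}'}}E'_{a'|x}\sqrt{\rho_{\bm{\sigma}'}}$ together with linearity of $K(\cdot)K^\dagger$ gives $\tau'_{a'|x}=\sum_a p(a'|a,x)\tau_{a|x}$; classical pre-processing and convex mixing with jointly measurable assemblages lift analogously. Since these induced maps are steering-free, monotonicity of $S$ yields $S(\bm{\tau})\geq S(\bm{\tau}')$, hence $S(\bm{\tau}')\leq I_S(\MA)$, and taking the supremum over $(\bm{\sigma}',\bm{\tau}')$ proves (ii).

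The main obstacle is carrying this lift uniformly through \emph{all} free operations of the incompatibility resource theory. Classical pre- and post-processing and convex mixing with ${\bf JM}$ elements admit the explicit identities above, but more general operations --- simulation-type maps or operations that use auxiliary jointly measurable assemblages --- require a case-by-case check that the induced state-assemblage transformation is ${\bf LHS}$-preserving. The safe general argument is to run the SEO dictionary in reverse: a compatibility-preserving $\Lambda$ acting on SEOs together with the correspondence ${\bf B}^{(\cdot)}$ specifies a map on state assemblages whose image SEO remains in ${\bf JM}$ whenever the input SEO does, and by the Uola-Kiukas equivalence it therefore cannot send ${\bf LHS}$ into a steerable assemblage. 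Combined with Theorem~\ref{Result:LF1=SEO-ordering} to certify that the lifted filter is a valid $\mathrm{LF}_1$ operation on $\bm{\sigma}$, this closes the monotonicity argument.
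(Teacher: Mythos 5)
Your argument is correct and follows essentially the same route as the paper: both proofs reduce to the identity that conjugating a classically pre/post-processed SEO by $\sqrt{\eta}U(\cdot)U^\dagger\sqrt{\eta}$ yields a free steering operation (Eq.~\eqref{Eq: free steering} with $\mathcal{E}_\omega=p(\omega)\mathcal{I}$) applied to the conjugated original, and then invoke monotonicity of $S$ under Eq.~\eqref{Eq: free steering}; the only cosmetic difference is that the paper first uses Theorem~\ref{Result:LF1=SEO-ordering} to rewrite $I_S(\MA)$ as a supremum of $S\left(\sqrt{\eta}U\MA U^\dagger\sqrt{\eta}\right)$ over states $\eta$ and unitaries $U$, whereas you lift individual feasible pairs. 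One caveat: the ``safe general argument'' in your final paragraph is not actually safe --- establishing that the induced map on state assemblages preserves ${\bf LHS}$ does not imply that $S$ is non-increasing under it, since a steering monotone is only required to be monotone under the free operations of Eq.~\eqref{Eq: free steering}, not under arbitrary ${\bf LHS}$-preserving maps. Fortunately that fallback is never needed, because the free operations of incompatibility in Eq.~\eqref{Eq: free incom Paul} are exactly the classical pre/post-processings with shared randomness that your explicit identities already cover.
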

Detailed proof as well as a complete statement of this result can be found in Appendix~\ref{App:Result2}.
Theorem~\ref{Results:IS is income monotone} {\em quantitatively} bridges quantum steering and incompatibility in the most general setting via their corresponding resource theories~\cite{Gallego2015PRX,Paul2019PRL}.
Notably, if there is a steering measure with certain physical/operational meaning, one can obtain a measure for measurement incompatibility with a similar interpretation by using Theorem~\ref{Results:IS is income monotone}. 
For instance, if we consider the metrological steering monotone~\cite{Yadin2021NC}, one can then estimate measurement incompatibility via a metrologic task.
Theorem~\ref{Results:IS is income monotone} provides a full generality for this types of mapping.

\subsection{Fundamental Limitations on Steering Distillation}

It is useful to consider some specific measures to illustrate Theorem~\ref{Results:IS is income monotone}.
To this end, we focus on the {\em robustness-type measures}, which are widely used in quantum information theory~\cite{Designolle2019,Takagi2019,Hsieh2016PRA,Ku2018PRA,Tendick2023Quantum,Uola2015PRL,Haapasalo2015,Buscemi2020PRL,
Piani2015,Cavalcanti2016PRA,Chen2018PRA,Hsieh2022PRR,Hsieh2023-2,SDP-textbook}.
To illustrate what they are, now, we write $p{\bf E}+(1-p){\bf M}\coloneqq\{pE_{a|x}+(1-p)M_{a|x}\}_{a,x}$, and similar notations are also used for state assemblages.
The {\em generalised incompatibility robustness} is defined by~\cite{Uola2015PRL,Haapasalo2015,Buscemi2020PRL}
\begin{align}
{\rm IR}({\bf E})\coloneqq\inf\left\{t\ge0\,\middle|\,\frac{{\bf E}+t{\bf M}}{1+t}\in{\bf JM}\right\},
\end{align}
which is minimising over all possible measurement assemblages ${\bf M}$.
Physically, ${\rm IR}({\bf E})$ measures the smallest amount of {\em noise} needed to turn ${\bf E}$ compatible.
In other words, ${\bf M}$ takes the role of some noise being added to destroy ${\bf E}$'s incompatibility.
This type of measure can be defined for a wide range of physical phenomena with {\em different types} of noise.
For instance, the {\em generalised steering robustness}~\cite{Piani2015,Cavalcanti2016PRA} is 
\begin{align}
{\rm SR}({\bm\sigma})\coloneqq\inf\left\{t\ge0\,\middle|\,\frac{{\bm\sigma}+t{\bm\omega}}{1+t}\in{\bf LHS}\right\},
\end{align}
which minimises over every state assemblage ${\bm\omega}$.
As another example, the {\em consistent generalised steering robustness}~(see, e.g., Refs.~\cite{Cavalcanti2016PRA,Chen2018PRA}) is 
\begin{align}
{\rm SR}^{({\rm c})}({\bm\sigma})\coloneqq\inf\left\{t\ge0\,\middle|\,\frac{{\bm\sigma}+t{\bm\omega}}{1+t}\in{\bf LHS},\rho_{\bm\omega}=\rho_{\bm\sigma}\right\},
\end{align}
which can be viewed as a steering robustness with the `noise model' containing all ${\bm\omega}$ with the condition $\rho_{\bm\omega}=\rho_{\bm\sigma}$.
In general, one can consider different types of noise, and the associated steering robustness is called {\em consistent} if the condition $\rho_{\bm\omega}=\rho_{\bm\sigma}$ is always imposed.

It turns out that ${\rm LF_1}$ filters provide a general and quantitative way to link {\em different types} of robustness measures of incompatibility and steering.
To make this link precise, we introduce the following notion to describe different noise models.
Formally, we define 
a {\em noise model for quantum steering} as a map $\mathcal{N}_S$ that maps a state assemblage into {\em a set of} state assemblages; namely,
\mbox{${\bm\sigma}\mapsto\mathcal{N}_S({\bm\sigma}) = \{{\bm\omega}\}.$}
The set $\mathcal{N}_S({\bm\sigma}) = \{{\bm\omega}\}$ contains all possible ${\bm\omega}$ that can be used as noises to mix with the given ${\bm\sigma}$.
The {\em steering robustness subject to noise $\mathcal{N}_S$} is defined by
\begin{align}\label{Eq:SRNS}
{\rm SR}_{\mathcal{N}_S}({\bm\sigma})\coloneqq\inf\left\{t\ge0\,\middle|\frac{{\bm\sigma}+t{\bm\omega}}{1+t}\in{\bf LHS},{\bm\omega}\in\mathcal{N}_S({\bm\sigma})\right\}.
\end{align}
For instance, if $\mathcal{N}_S({\bm\sigma})$ is the set of all state assemblages, we revisit the generalised steering robustness ${\rm SR}$.
When $\mathcal{N}_S({\bm\sigma})=\{\text{${\bm\omega}$: state assemblage}\,|\,\rho_{\bm\omega}=\rho_{\bm\sigma}\}$, then we get the consistent generalised steering robustness ${\rm SR}^{({\rm c})}$.

Now, similarly, a {\em noise model for measurement incompatibility} is a map $\mathcal{N}_I$ that maps a measurement assemblage into {\em a set of} measurement assemblages; namely,
\mbox{$
{\bf E}\mapsto\mathcal{N}_I({\bf E}) = \{{\bf M}\}.
$}
The set $\mathcal{N}_I({\bf E}) = \{{\bf M}\}$ contains all possible ${\bf M}$ that can be used as noises to mix with ${\bf E}$.
The {\em incompatibility robustness subject to noise $\mathcal{N}_I$} is defined by 
\begin{align}\label{Eq:IRNI}
{\rm IR}_{\mathcal{N}_I}({\bf E})\coloneqq\inf\left\{t\ge0\,\middle|\,\frac{{\bf E}+t{\bf M}}{1+t}\in{\bf JM},\;{\bf M}\in\mathcal{N}_I({\bf E})\right\}.
\end{align}
If we set $\mathcal{N}_I({\bf E})=$ the set of all measurement assemblages, we revisit the generalised incompatible robustness ${\rm IR}$.

It turns out that ${\rm SR}_{\mathcal{N}_S}$ and ${\rm IR}_{\mathcal{N}_I}$ can be linked by SEO ordering. 
To this end, we say that $\mathcal{N}_I$ is {\em  SEO-included} by $\mathcal{N}_S$, and we denote it by $\mathcal{N}_I \subset_{\rm SEO} \mathcal{N}_S$, if it holds that 
\begin{align}\label{Eq:SEO-ordering-induced}
\sqrt{\eta}U\mathcal{N}_I\left({\bf M}\right)U^\dagger\sqrt{\eta}\subseteq\mathcal{N}_S\left(\sqrt{\eta}U{\bf M}U^\dagger\sqrt{\eta}\right)
\end{align}
for every measurement assemblage ${\bf M}$, state $\eta$, and unitary $U$ with the condition ${\rm supp}(\eta)\subseteq{\rm supp}(U\id_{\bf M}U^\dagger)$, where \mbox{$\id_{\bf M} \coloneqq \sum_aM_{a|x}\,\forall\,x$;} i.e., it is the identity of the (sub-)space that ${\bf M}$ lives in.
Note that the notation $O\mathcal{N}_I({\bf E})O^\dagger$ denotes the set
$\left\{O{\bf W}O^\dagger\,|\,{\bf W}\in\mathcal{N}_I({\bf E})\right\}$.
In other words, every {\em measurement} assemblage noise can induce some {\em state} assemblage noise via SEO ordering. 
Notice that $\subset_{\rm SEO}$ is {\em not} a preorder, since it always connects two different types of noises, so it is not homogeneous.
As one can check, the noise models for ${\rm IR}$ and ${\rm SR}$ satisfy this inclusion, and the same holds for the noise models for ${\rm IR}$  and ${\rm SR}^{({\rm c})}$.
Now we present the following result, which is proved in Appendix~\ref{App:Proof-Result:duality-noise-model}:
\begin{result}\label{Result:duality-noise-model}
Suppose $\mathcal{N}_I \subset_{\rm SEO} \mathcal{N}_S$.
Then for every state assemblage ${\bm\sigma}$, we have that
\begin{align}
\sup\left\{{\rm SR}_{\mathcal{N}_S}\left({\bm\tau}\right)\,\middle|\,{\bm\sigma}\xrightarrow{\rm LF_1}{\bm\tau}\right\} \le {\rm IR}_{\mathcal{N}_I}\left({\bf B}^{({\bm\sigma})}\right).
\end{align}
\end{result}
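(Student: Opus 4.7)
The plan is to establish the inequality pointwise in ${\bm\tau}$: for every ${\bm\tau}$ with ${\bm\sigma}\xrightarrow{\rm LF_1}{\bm\tau}$, I will prove ${\rm SR}_{\mathcal{N}_S}({\bm\tau})\le{\rm IR}_{\mathcal{N}_I}({\bf B}^{({\bm\sigma})})$, after which taking the supremum over ${\bm\tau}$ delivers the claim. The bridge between the two sides is Theorem~\ref{Result:LF1=SEO-ordering}, which hands me a unitary $U$ satisfying ${\rm supp}(\rho_{\bm\tau})\subseteq{\rm supp}(U\rho_{\bm\sigma}U^\dagger)$ and $\tau_{a|x}=\sqrt{\rho_{\bm\tau}}\,U B^{({\bm\sigma})}_{a|x}U^\dagger\sqrt{\rho_{\bm\tau}}$. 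The strategy is then to pull back any joint-measurability certificate for ${\bf B}^{({\bm\sigma})}$ through this sandwiching and obtain an LHS certificate for ${\bm\tau}$ carrying the same noise weight.

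Concretely, fix any $t$ strictly above ${\rm IR}_{\mathcal{N}_I}({\bf B}^{({\bm\sigma})})$ together with a witness ${\bf M}\in\mathcal{N}_I({\bf B}^{({\bm\sigma})})$ for which $\frac{{\bf B}^{({\bm\sigma})}+t{\bf M}}{1+t}$ is jointly measurable with some joint POVM $\{G_\lambda\}$ and classical post-processing $\{P(a|x,\lambda)\}$. Define the candidate noise for ${\bm\tau}$ by
\[
\omega_{a|x}\coloneqq\sqrt{\rho_{\bm\tau}}\,U M_{a|x}U^\dagger\sqrt{\rho_{\bm\tau}}.
\]
Applying the SEO-inclusion hypothesis $\mathcal{N}_I\subset_{\rm SEO}\mathcal{N}_S$ with $\eta=\rho_{\bm\tau}$ and the measurement assemblage ${\bf B}^{({\bm\sigma})}$ immediately yields ${\bm\omega}\in\mathcal{N}_S({\bm\tau})$. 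Conjugating the joint-measurability identity by $\sqrt{\rho_{\bm\tau}}U$ on both sides gives
\[
\frac{\tau_{a|x}+t\,\omega_{a|x}}{1+t}=\sum_\lambda P(a|x,\lambda)\,\sqrt{\rho_{\bm\tau}}UG_\lambda U^\dagger\sqrt{\rho_{\bm\tau}},
\]
and renormalising each positive operator $\sqrt{\rho_{\bm\tau}}UG_\lambda U^\dagger\sqrt{\rho_{\bm\tau}}$ produces a bona fide LHS decomposition. Therefore ${\rm SR}_{\mathcal{N}_S}({\bm\tau})\le t$, and letting $t\downarrow{\rm IR}_{\mathcal{N}_I}({\bf B}^{({\bm\sigma})})$ closes the pointwise inequality.

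The main technical obstacle is aligning the two support conditions: the SEO-inclusion demands ${\rm supp}(\eta)\subseteq{\rm supp}(U\id_{{\bf B}^{({\bm\sigma})}}U^\dagger)$, while Theorem~\ref{Result:LF1=SEO-ordering} provides ${\rm supp}(\rho_{\bm\tau})\subseteq{\rm supp}(U\rho_{\bm\sigma}U^\dagger)$. These are in fact identical, because $\id_{{\bf B}^{({\bm\sigma})}}$ is by definition the identity on ${\rm supp}(\rho_{\bm\sigma})$, so $U\id_{{\bf B}^{({\bm\sigma})}}U^\dagger$ is the projector onto ${\rm supp}(U\rho_{\bm\sigma}U^\dagger)$. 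Verifying this compatibility, together with ensuring that ${\bf M}$ and ${\bf B}^{({\bm\sigma})}$ are POVMs on the same (sub)space so that the ${\rm IR}$ mixture is well defined, is the only conceptual content beyond linear algebra; the remainder of the argument is essentially a conjugation computation.
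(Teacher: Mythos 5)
Your proposal is correct and is essentially the paper's own argument, merely unwound: where the paper chains together relaxations of the ${\rm SR}_{\mathcal{N}_S}$ optimisation (shrinking the noise set via the SEO-inclusion with $\eta=\rho_{\bm\tau}$, then replacing the sandwiched-LHS constraint by the stronger JM constraint), you construct the corresponding LHS certificate explicitly from a near-optimal JM certificate, using the same three ingredients --- Theorem~\ref{Result:LF1=SEO-ordering}, the SEO-inclusion applied with $\eta=\rho_{\bm\tau}$, and the fact that sandwiching a jointly measurable assemblage by $\sqrt{\rho_{\bm\tau}}U(\cdot)U^\dagger\sqrt{\rho_{\bm\tau}}$ yields an LHS assemblage. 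The only cosmetic point is that for a general noise model the infimum defining ${\rm IR}_{\mathcal{N}_I}$ need not make every $t$ above it feasible, so you should pick a feasible pair $(t',{\bf M})$ with $t'<{\rm IR}_{\mathcal{N}_I}\left({\bf B}^{({\bm\sigma})}\right)+\epsilon$ rather than assuming an arbitrary $t$ above the infimum admits a witness; the argument is otherwise unchanged.
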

Interestingly, Theorem~\ref{Result:duality-noise-model} provides fundamental limitations on steering distillation under local filters in ${\rm LF_1}$ when we measure steerability via ${\rm SR}_{\mathcal{N}_S}$ --- namely, its value can {\em never} go beyond the incompatibility robustness ${\rm IR}_{\mathcal{N}_I}$.
In other words, as long as the SEO inclusion holds, the distillable steerability is always controlled and limited by the amount of incompatibility of the corresponding SEO.
This finding reveals a foundational hierarchy between steering and incompatibility in the most general setting.

\subsection{Quantifying Steering Distillation by Measurement Incompatibility}
A very natural question raised by Theorem~\ref{Result:duality-noise-model} is: {\em when can the upper bound be saturated?}
It turns out that it is closely related to the consistency condition for the steering robustness.
As an explicit example, we consider the relation of ${\rm SR},{\rm SR^{(c)}}$, and ${\rm IR}$ through steering distillation tasks.
Combining our finding and Refs.~\cite{Ku2022NC,Chen2018PRA}, we have the following observation (in what follows, ${\rm LF}$ denotes {\em general} local filters, i.e., completely-positive trace-non-increasing linear maps; we write ${\bm\sigma}\xrightarrow{\rm LF}{\bm\tau}$ if such a filter exists with non-vanishing success probability):
\begin{observation}\label{Obs}
For every ${\bm\sigma}$, we have that
\begin{align}
{\rm IR}\left({\bf B}^{({\bm\sigma})}\right)  &= \sup\left\{{\rm SR}({\bm\tau})\,\middle|\, {\bm\sigma}\xrightarrow{\rm LF}{\bm\tau}\right\}\nonumber\\
&= \sup\left\{{\rm SR}({\bm\tau})\,\middle|\, {\bm\sigma}\xrightarrow{\rm LF_1}{\bm\tau}\right\} = {\rm SR}^{({\rm c})}({\bm\sigma}).
\end{align}
\end{observation}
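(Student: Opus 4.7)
I denote the four quantities as $A \coloneqq {\rm IR}({\bf B}^{({\bm\sigma})})$, $B \coloneqq \sup\{{\rm SR}({\bm\tau}) \mid {\bm\sigma}\xrightarrow{\rm LF}{\bm\tau}\}$, $C \coloneqq \sup\{{\rm SR}({\bm\tau}) \mid {\bm\sigma}\xrightarrow{\rm LF_1}{\bm\tau}\}$, and $D \coloneqq {\rm SR}^{({\rm c})}({\bm\sigma})$. The plan is to establish the four links $A=D$, $C\le A$, $B=C$, and $D\le C$, which collectively close the loop $A=B=C=D$. This combines Theorems~\ref{Result:LF1=SEO-ordering} and~\ref{Result:duality-noise-model} with the results of Refs.~\cite{Chen2018PRA, Ku2023, Ku2022NC}.

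Three of the links are essentially immediate. The identity $A = D$ is the content of Ref.~\cite{Chen2018PRA}: a consistent noise ${\bm\omega}$ with $\rho_{\bm\omega}=\rho_{\bm\sigma}$ has the form ${\bm\omega}=\sqrt{\rho_{\bm\sigma}}{\bf M}\sqrt{\rho_{\bm\sigma}}$ for some measurement assemblage ${\bf M}$, so that the ${\bf LHS}$ condition on the state-level mixture $({\bm\sigma}+t{\bm\omega})/(1+t)$ translates exactly to the ${\bf JM}$ condition on the SEO-level mixture $({\bf B}^{({\bm\sigma})}+t{\bf M})/(1+t)$. The bound $C \le A$ is Theorem~\ref{Result:duality-noise-model} applied with $\mathcal{N}_S$ and $\mathcal{N}_I$ both maximal (all state, resp.\ measurement, assemblages); the SEO-inclusion~\eqref{Eq:SEO-ordering-induced} is trivial in this case, and one recovers ${\rm SR}_{\mathcal{N}_S}={\rm SR}$, ${\rm IR}_{\mathcal{N}_I}={\rm IR}$. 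The equality $B = C$ follows directly from Ref.~\cite{Ku2023}: ${\rm LF}_1 \subseteq {\rm LF}$ gives $C \le B$, while the main result of~\cite{Ku2023}, that no general ${\rm LF}$ distils more steering than the best ${\rm LF}_1$, gives $B \le C$.

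The delicate step is $D \le C$. By Theorem~\ref{Result:LF1=SEO-ordering} combined with Ref.~\cite{Ku2022NC}, the entire SEO equivalence class of ${\bm\sigma}$ is contained in its forward ${\rm LF}_1$-orbit, and within this class the marginal $\rho_{\bm\tau}$ may be chosen freely on ${\rm supp}({\bf B}^{({\bm\sigma})})$. I would select the canonical ${\bm\tau}$ with $\rho_{\bm\tau}=\Pi/d$, where $\Pi$ projects onto ${\rm supp}({\bf B}^{({\bm\sigma})})$ and $d=\dim\Pi$. Since ${\bm\tau}\sim_{\rm SEO}{\bm\sigma}$, the already-established $A=D$ applied to ${\bm\tau}$ gives ${\rm SR}^{({\rm c})}({\bm\tau}) = {\rm IR}({\bf B}^{({\bm\tau})}) = {\rm IR}({\bf B}^{({\bm\sigma})}) = A$. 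It thus suffices to show ${\rm SR}({\bm\tau})={\rm SR}^{({\rm c})}({\bm\tau})$ for this canonical ${\bm\tau}$, from which $C\ge{\rm SR}({\bm\tau})=D$ closes the chain.

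The hard part is exactly verifying ${\rm SR}({\bm\tau})\ge{\rm SR}^{({\rm c})}({\bm\tau})$ for the canonical ${\bm\tau}$; the reverse inequality is automatic since the consistent constraint shrinks the noise set. Given an optimal noise ${\bm\omega}$ achieving $({\bm\tau}+t{\bm\omega})/(1+t)\in{\bf LHS}$ at $t={\rm SR}({\bm\tau})$, the plan is first to project onto $\Pi$---which preserves the ${\bf LHS}$ property since ${\bm\tau}$ is already supported on $\Pi$---reducing to the case ${\rm supp}(\rho_{\bm\omega})\subseteq\Pi$, and then to use the identity ${\bm\tau}={\bf B}^{({\bm\tau})}/d$ (a consequence of $\rho_{\bm\tau}=\Pi/d$) to pass from the state-level ${\bf LHS}$ decomposition of the mixture to a measurement-level ${\bf JM}$ decomposition of $({\bf B}^{({\bm\tau})}+t\tilde{\bf M})/(1+t)$ for a suitable $\tilde{\bf M}$, after appropriate re-scaling to compensate for $\tr(\Pi\rho_{\bm\omega}\Pi)\ne 1$. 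This would give ${\rm IR}({\bf B}^{({\bm\tau})})\le t$ and hence ${\rm SR}^{({\rm c})}({\bm\tau})=A\le t={\rm SR}({\bm\tau})$, as required. Carefully tracking the normalisations through the projection and re-scaling is the technical heart of the argument.
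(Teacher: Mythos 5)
Your decomposition into the four links $A=D$, $C\le A$, $B=C$, $D\le C$ is sound, and the first three are handled exactly as in the paper: $A=D$ from Ref.~\cite{Chen2018PRA}, $C\le A$ from Theorem~\ref{Result:duality-noise-model} with the maximal noise models, and $B=C$ from the ${\rm LF_1}$-optimality result of Ref.~\cite{Ku2023}. Where you diverge is the lower bound $D\le C$. The paper closes the loop with a one-line sandwich: it restricts the supremum to the reversible orbit, $\sup\{{\rm SR}({\bm\tau})\mid{\bm\sigma}\xrightarrow{\rm LF_1}{\bm\tau},\ {\bm\tau}\xrightarrow{\rm LF_1}{\bm\sigma}\}={\rm IR}({\bf B}^{({\bm\sigma})})$, quoting this equality as the main result of Ref.~\cite{Ku2022NC}. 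You instead try to exhibit an explicit witness, the canonical assemblage ${\bm\tau}={\bf B}^{({\bm\sigma})}/d$ with $\rho_{\bm\tau}=\Pi/d$, and to prove ${\rm SR}({\bm\tau})={\rm SR}^{({\rm c})}({\bm\tau})$ from first principles. The choice of witness is legitimate (Theorem~\ref{Result:LF1=SEO-ordering} with $U=\id$ does put it in the forward orbit, and ${\bf B}^{({\bm\tau})}={\bf B}^{({\bm\sigma})}$), but the argument you sketch for the key identity does not go through.

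Concretely: let ${\bm\nu}=({\bm\tau}+t{\bm\omega})/(1+t)\in{\bf LHS}$ with ${\bm\omega}$ an optimal \emph{unconstrained} noise, which you may indeed take supported on $\Pi$. The ${\bf LHS}$ property of ${\bm\nu}$ is equivalent to joint measurability of its SEO, $\sqrt{\rho_{\bm\nu}}^{\,-1}{\bm\nu}\sqrt{\rho_{\bm\nu}}^{\,-1}$ with $\rho_{\bm\nu}=(\Pi/d+t\rho_{\bm\omega})/(1+t)$. Unless $\rho_{\bm\omega}\propto\Pi$ --- i.e.\ unless the noise was already consistent, which is what you are trying to prove --- the conjugation sends $B_{a|x}/d$ to $\sqrt{\rho_{\bm\nu}}^{\,-1}B_{a|x}\sqrt{\rho_{\bm\nu}}^{\,-1}/d$, which is \emph{not} a scalar multiple of $B_{a|x}$. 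Hence the resulting jointly measurable assemblage is not of the form $({\bf B}^{({\bm\sigma})}+t\tilde{\bf M})/(1+t)$, and no rescaling ``to compensate for $\tr(\Pi\rho_{\bm\omega}\Pi)\neq1$'' can repair this: the obstruction is an operator-valued distortion of ${\bf B}^{({\bm\sigma})}$, not a normalisation mismatch. The statement you need is in fact true --- ${\rm SR}({\bf B}^{({\bm\sigma})}/d)={\rm IR}({\bf B}^{({\bm\sigma})})$ is the saturation, at the maximally-entangled marginal, of the bound ${\rm SR}\le{\rm IR}$ of Ref.~\cite{Cavalcanti2016PRA} --- but its known proof goes through SDP duality rather than a direct manipulation of the LHS decomposition. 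To close the gap you should either import that duality argument (or cite Ref.~\cite{Cavalcanti2016PRA}), or simply cite the equivalence-class result of Ref.~\cite{Ku2022NC} as the paper does.
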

\begin{proof}
A direct computation shows that
\begin{align}
{\rm IR}\left({\bf B}^{({\bm\sigma})}\right)&\ge
\sup\left\{{\rm SR}({\bm\tau})\,\middle|\,{\bm\sigma}\xrightarrow{\rm LF_1}{\bm\tau}\right\}\nonumber\\
&\ge\sup\left\{{\rm SR}({\bm\tau})\,\middle|\,{\bm\sigma}\xrightarrow{\rm LF_1}{\bm\tau},{\bm\tau}\xrightarrow{\rm LF_1}{\bm\sigma}\right\}={\rm IR}\left({\bf B}^{({\bm\sigma})}\right).\nonumber
\end{align}
The first inequality is from Theorem~\ref{Result:duality-noise-model}, the second inequality is due to reducing the maximisation range, and the last equality is the main result of Ref.~\cite{Ku2022NC}.
Finally, since ${\rm LF_1}$ are optimal over all ${\rm LF}$ filters (see Ref.~\cite{Ku2023}) and  \mbox{${\rm IR}\left({\bf B}^{({\bm\sigma})}\right) = {\rm SR}^{({\rm c})}({\bm\sigma})$} (see Ref.~\cite{Chen2018PRA}), the result follows.
\end{proof}

This result has several implications. First, it provides an operational meaning to ${\rm SR^{(c)}}$ as the 
maximal steerability that can be extracted from ${\rm LF}$ filters, as it was previously noted for 
${\rm IR}$. In other words, ${\rm SR}^{({\rm c})}$ is a natural {\em steering distillation} measure. As a 
consequence, ${\rm SR^{(c)}}$ is a quantity that cannot be stochastically distilled by ${\rm LF}$ filters on the trusted side ($B$), as it provides the same value. 
Notice that, as a consequence, ${\rm SR^{(c)}}$ takes the same value on state assemblages with very different values of the generalised robustness ${\rm SR}$. 
In fact, in the same SEO equivalence class, there exist state assemblages with the maximal ${\rm SR}$ (corresponding to ${\rm IR}$) and also state assemblages with ${\rm SR}$ arbitrarily close to zero~\cite{Ku2022NC}, whereas ${\rm SR}^{({\rm c})}$ always provides the same value corresponding to the maximally distillable steerability, as quantified by ${\rm SR}$.
Finally, this finding shows that ${\rm LF}$ cannot outperform ${\rm LF_1}$ in stochastic steering distillation, as long as steerability is measured by ${\rm SR}$.
From this perspective, Theorem~\ref{Result:LF1=SEO-ordering} can be viewed as a complete characterisation of stochastic steering distillation in a general sense.

\section{Conclusion}
In this work, we prove the first complete characterisation of irreversible steering conversion under local filters in ${\rm LF_1}$. 
This finding also fully solves a major open question of Ref.~\cite{Ku2022NC}.
At the same time, we provide a general way of constructing incompatibility measures from steering measures and show how general incompatibility measures serve as upper bounds on the highest achievable steerability under local filters. 
As a consequence, fundamental limitations on irreversible stochastic steering distillation follow.
Finally, we provide an operational interpretation of the consistent steering robustness as a measure of steering distillability, rather than a direct measure of steerability.

Several questions remain open.
First, Ref.~\cite{Ku2023} reports that, rather surprisingly, measurement incompatibility {\em cannot} be distilled by local filters, even with non-physical filter operations.
Hence, despite their mathematical equivalence, quantum steering and measurement incompatibility behave in opposite ways with respect to stochastic distillation. 
It is then interesting to explore the difference between these two seemingly equivalent physical phenomena. Furthermore, it is interesting to study possible applications of stochastic steering distillation to thermodynamics and information transmission~\cite{BeyerPRL2019,Hsieh2020,Hsieh2021PRXQ,Hsieh2022,Hsieh2020,Stratton2023}.
Finally, from a practical perspective, it is useful to further study how to use stochastic steering distillation to improve steering inequality violation~\cite{CostaPRA2016,SkrzypczykPRL2018,JiPRL2022,LopeteguiPRXQ2022} and its activation~\cite{Hsieh2016,Quintino2016,KuPRA2023}, which can potentially enhance (one-sided) device-independent quantum information protocols.

\section*{Acknowledgements}
The authors acknowledge fruitful discussions with Paul Skrzypczyk and Roope Uola.
C.-Y. H.~is supported by the Royal Society through Enhanced Research Expenses (on grant NFQI) and the ERC Advanced Grant (FLQuant).
H.-Y.~K.~and C.~B.~are supported by the
Ministry of Science and Technology, Taiwan, (Grants No.~MOST 111-2917-I-564-005), the Austrian Science Fund (FWF) through Projects No.~ZK 3 (Zukunftskolleg), and No.~F7113 (BeyondC).

\appendix

\section{Proof of Theorem~\ref{Result:LF1=SEO-ordering}}\label{App:Proof-Result:LF1=SEO-ordering}
\begin{proof} ({\em direction} ``$\Rightarrow$'') Suppose ${\bm\sigma}\succ_{\rm SEO}{\bm\tau}$. 
Then we have 
\begin{align}\label{eq:sigma_tau}
\tau_{a|x} = \sqrt{\rho_{\bm\tau}} U \sqrt{\rho_{\bm\sigma}}^{\,-1}{\sigma}_{a|x}\sqrt{\rho_{\bm\sigma}}^{\,-1} U^\dagger \sqrt{\rho_{\bm\tau}}\quad\forall\,a,x
\end{align}
for some unitary $U$ achieving
${\rm supp}(\rho_{\bm\tau})\subseteq{\rm supp}(U\rho_{\bm\sigma}U^\dagger)$.
Equation \eqref{eq:sigma_tau} already suggests the correct Kraus operator, up to normalisation. 
Let us define 
\begin{align}\label{Eq:lambda_opt}
\lambda_{\rm opt}\coloneqq \min \left\{\lambda \geq 0\, |\,\rho_{\bm\tau}\leq \lambda  U\rho_{\bm\sigma}U^\dagger \right\}.
\end{align}
This implies 
\begin{align}
\sqrt{\rho_{\bm\sigma}}^{\,-1} U^\dagger \rho_{\bm\tau} U \sqrt{\rho_{\bm\sigma}}^{-1}\leq \lambda_{\rm opt} \id_{\bm\sigma},
\end{align}
where $\id_{\bm\sigma}$ is the identity of the subspace ${\rm supp}(\rho_{\bm\sigma})$. Notice that $\lambda_{\rm opt}$ exists and $1\le \lambda_{\rm opt} < \infty$,
because of the inclusion of the supports, and consequently
$D_{\rm max}(\rho_{\bm\tau}\,\|\,U\rho_{\bm\sigma}U^\dagger)=\log_2\lambda_{\rm opt}$.
Now, define the operator
\begin{align}\label{eq:L_def}
L\coloneqq \sqrt{\lambda_{\rm opt}}^{\,-1}\sqrt{\rho_{\bm\tau}} U \sqrt{\rho_{\bm\sigma}}^{\;-1}.
\end{align}
It is then straightforward to check that $L^\dagger L\le\id_{\bm\sigma}$, meaning that it is a local filter in ${\rm LF_1}$.
The success probability for the filter $L$ reads
\begin{align}\label{eq:p_succ_L}
p_{\rm succ} = {\rm tr}\left(L\rho_{\bm\sigma}L^\dagger\right)= \lambda_{\rm opt}^{-1} =  2^{-D_{\rm max}(\rho_{\bm\tau}\,\|\,U\rho_{\bm\sigma}U^\dagger)}>0,
\end{align}
which is non-vanishing since $\lambda_{\rm opt}<\infty$.
Hence, it provides the correct state assemblage, since
$
L\sigma_{a|x}L^\dagger/p_{\rm succ}
= \tau_{a|x}\;\forall\,a,x.
$
This means that ${\bm\sigma}\xrightarrow{\rm LF_1}{\bm\tau}$, as desired.

({\em  direction} ``$\Leftarrow$'') 
Suppose that ${\bm\sigma}\xrightarrow{\rm LF_1}{\bm\tau}$.
Then we have
\begin{align}\label{eq:tau_filter}
\tau_{a|x} = \frac{K\sqrt{\rho_{\bm\sigma}}{\bf B}_{a|x}^{({\bm\sigma})}\sqrt{\rho_{\bm\sigma}}K^\dagger}{p_{\rm succ}}
\end{align} 
with some local filter $K$ satisfying $K^\dagger K\le\id$ and \mbox{$p_{\rm succ} = {\rm tr}\left(K\rho_{\bm\sigma}K^\dagger\right)>0$.}
By the polar decomposition theorem~\cite{QIC-book}, there exist a unitary $U$ and positive operator $P\ge0$ such that 
\begin{align}
\frac{K\sqrt{\rho_{\bm\sigma}}}{\sqrt{p_{\rm succ}}}=UP,
\end{align}
where
\begin{align}
P = \sqrt{\left(\frac{K\sqrt{\rho_{\bm\sigma}}}{\sqrt{p_{\rm succ}}}\right)^\dagger\left(\frac{K\sqrt{\rho_{\bm\sigma}}}{\sqrt{p_{\rm succ}}}\right)}=\frac{\sqrt{\sqrt{\rho_{\bm\sigma}}K^\dagger K\sqrt{\rho_{\bm\sigma}}}}{\sqrt{p_{\rm succ}}}.
\end{align}
Using the relation $K^\dagger K\le\id$, we can write (with $p_{\rm succ}>0$)
\begin{align}\label{Eq:condition001}
P^2 \le\frac{\rho_{\bm\sigma}}{p_{\rm succ}},
\end{align} 
meaning that \mbox{${\rm supp}(P) = {\rm supp}(P^2)\subseteq{\rm supp}(\rho_{\bm\sigma})$.}
Together with Eq.~\eqref{eq:tau_filter}, we obtain
\begin{align}\label{Eq:condition002}
\rho_{\bm\tau} = UP\id_{\bm\sigma}PU^\dagger = UP^2U^\dagger.
\end{align}
Hence, $UPU^\dagger = \sqrt{\rho_{\bm\tau}}$, and we have \mbox{$\tau_{a|x} = \sqrt{\rho_{\bm\tau}}U{\bf B}_{a|x}^{({\bm\sigma})}U^\dagger\sqrt{\rho_{\bm\tau}}$} with ${\rm supp}(\rho_{\bm\tau})\subseteq{\rm supp}(U\rho_{\bm\sigma}U^\dagger)$, meaning that ${\bm\sigma}\succ_{\rm SEO}{\bm\tau}$.
This concludes the proof of the claim ${\bm\sigma}\succ_{\rm SEO}{\bm\tau} \Leftrightarrow {\bm\sigma}\xrightarrow{\rm LF_1}{\bm\tau}$.

({\em computation of maximal success probability}) 
Suppose ${\bm\sigma}\xrightarrow{\rm LF_1}{\bm\tau}$.
For {\em every} ${\rm LF_1}$ filter mapping ${\bm\sigma}\mapsto{\bm\tau}$ with a success probability $p_{\rm succ}$, Eqs.~\eqref{Eq:condition001} and~\eqref{Eq:condition002} imply that \mbox{$\rho_{\bm\tau}\le U\rho_{\bm\sigma}U^\dagger/p_{\rm succ}$} for some unitary $U$ consistent with the decomposition in Eq.~\eqref{eq:sigma_tau}.
Using Eq.~\eqref{Eq:lambda_opt}, we obtain
\mbox{$
p_{\rm succ}\le\lambda_{\rm opt}^{-1} =  2^{-D_{\rm max}(\rho_{\bm\tau}\,\|\,U\rho_{\bm\sigma}U^\dagger)}.
$}
Since this holds for {\em every possible} local filter in ${\rm LF_1}$ mapping ${\bm\sigma}\mapsto{\bm\tau}$, we conclude that
\begin{equation}
p_{\rm succ}^{\rm max}({\bm\sigma}\xrightarrow{\rm LF_1}{\bm\tau}) \le \sup_{U\in\mathcal{U}({\bm\sigma}\succ_{\rm SEO}{\bm\tau})}2^{-D_{\rm max}\left(\rho_{\bm\tau}\,\|\, U\rho_{\bm\sigma}U^\dagger\right)}.
\end{equation}
Finally, this upper bound can be attained, since every such $U$ can induce an ${\rm LF_1}$ filter with success probability $2^{-D_{\rm max}\left(\rho_{\bm\tau}\,\|\, U\rho_{\bm\sigma}U^\dagger\right)}$ by following the proof above Eq.~\eqref{eq:p_succ_L}.
The proof is thus completed.
\end{proof}

\section{Proof of Theorem~\ref{Results:IS is income monotone}}\label{App:Result2}
Before stating the formal result, we briefly state the allowed operations considered here.
In this work, we consider free operations of measurement incompatibility introduced in Ref.~\cite{Paul2019PRL}, which are mappings $\MA\mapsto\mathcal{L}_{\rm{inc}}(\MA)$ given by
\begin{align}
\mathcal{L}_{\rm{inc}}(\MA)_{a'|x'}\coloneqq\sum_{a,x,\omega}p(\omega)p(x|x',\omega)p(a'|a,x',\omega)E_{a|x}.
    \label{Eq: free incom Paul}
\end{align}
This operation can be seen as the classical post-processing with the pre-existing randomness $\omega$.
Let us also briefly recall the free operations in the resource theory of quantum steering~\cite{Gallego2015PRX}, which are mappings ${\bm\sigma}\mapsto\mathcal{E}_{\rm steer}({\bm\sigma})$ given by
\begin{align}
\mathcal{E}_{\rm steer}({\bm\sigma})_{a'|x'}\coloneqq\sum_{a,x,\omega}p(x|x',\omega)p(a'|a,x,x',\omega)\mathcal{E}_{\omega}(\sigma_{a|x}),
    \label{Eq: free steering}
\end{align}
where $\{\mathcal{E}_{\omega}\}_\omega$ form a {\em quantum instrument}, i.e., it is a set of completely-positive trace-non-increasing maps with the property that  $\sum_\omega \mathcal{E}_{\omega}$ is trace-preserving.

Now we provide a complete version of Theorem~\ref{Results:IS is income monotone} as the following theorem.
Recall that a steering measure $S$ is {\em faithful} if $S({\bm\sigma}) = 0$ if and only if ${\bm\sigma}\in{\bf LHS}$.

\begin{atheorem}
For an arbitrary faithful steering measure $S$, we have $I_{S}(\MA)=0$ if and only if $\MA\in{\bf JM}$.
Moreover, \mbox{$I_{S}\left[\mathcal{L}_{\rm{inc}}(\MA)\right]\le I_S(\MA)$} for every $\MA$ and free operation $\mathcal{L}_{\rm{inc}}$.
\end{atheorem}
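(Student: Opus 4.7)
For the ``if'' direction ($\MA \in {\bf JM} \Rightarrow I_S(\MA) = 0$), I would observe that any $\bm\sigma$ with ${\bf B}^{(\bm\sigma)} = \MA \in {\bf JM}$ lies in ${\bf LHS}$ by the SEO--incompatibility equivalence recalled in the main text, and that any ${\rm LF_1}$ image of an LHS assemblage is again LHS (a one-line check: apply the filter to each hidden state $\rho_\lambda$ in an LHS decomposition of $\bm\sigma$ and renormalise the hidden weights by the local success probabilities $\mathrm{tr}(K\rho_\lambda K^\dagger)$). Faithfulness of $S$ then forces $S(\bm\tau) = 0$ for every such $\bm\tau$, giving $I_S(\MA) = 0$. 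For the ``only if'' direction I would argue by contrapositive: if $\MA \notin {\bf JM}$, pick any full-rank state $\rho$ on the ambient Hilbert space and set $\sigma_{a|x} := \sqrt{\rho}\, M_{a|x}\, \sqrt{\rho}$. Then $\rho_{\bm\sigma} = \rho$ and ${\bf B}^{(\bm\sigma)} = \MA$, so $\bm\sigma \notin {\bf LHS}$; the trivial filter $K = \id$ realises $\bm\sigma \xrightarrow{\rm LF_1} \bm\sigma$, whence $I_S(\MA) \ge S(\bm\sigma) > 0$.

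\textbf{Monotonicity.} Write $\MA' := \mathcal{L}_{\rm inc}(\MA)$ with post-processing data $\{p(\omega), p(x|x',\omega), p(a'|a,x',\omega)\}$ as in Eq.~\eqref{Eq: free incom Paul}. To prove $I_S(\MA') \le I_S(\MA)$, I would fix an arbitrary $\bm\sigma'$ with SEO equal to $\MA'$ together with any filter $K \in {\rm LF_1}$ mapping $\bm\sigma'$ to some $\bm\tau$ with success probability $p_{\rm succ}>0$. The central step is to lift the arrows $\MA \to \MA'$ and $\bm\sigma' \to \bm\tau$ to a commuting square on the steering side by introducing the auxiliary assemblage
\begin{equation*}
\sigma_{a|x} := \sqrt{\rho_{\bm\sigma'}}\, M_{a|x}\, \sqrt{\rho_{\bm\sigma'}}.
\end{equation*}
Since $\sum_a M_{a|x} = \id$ on ${\rm supp}(\rho_{\bm\sigma'})$, one checks that $\rho_{\bm\sigma} = \rho_{\bm\sigma'}$ and ${\bf B}^{(\bm\sigma)} = \MA$ on that support. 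Applying the \emph{same} filter $K$ to $\bm\sigma$ yields the same success probability $p_{\rm succ}$, so $\bm\tau' := K \bm\sigma K^\dagger / p_{\rm succ}$ is a legitimate ${\rm LF_1}$ image of $\bm\sigma$. Substituting the post-processing decomposition of $\MA'$ into the definition of $\bm\tau$ then gives
\begin{equation*}
\tau_{a'|x'} = \sum_{a,x,\omega} p(\omega)\, p(x|x',\omega)\, p(a'|a,x',\omega)\, \tau'_{a|x},
\end{equation*}
which is precisely $\bm\tau'$ fed through the steering free operation in Eq.~\eqref{Eq: free steering} with the classical instrument $\mathcal{E}_\omega(\cdot) := p(\omega)(\cdot)$ (indeed $\sum_\omega \mathcal{E}_\omega = \mathrm{id}$, so this is a bona fide instrument). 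Steering-monotonicity of $S$ gives $S(\bm\tau) \le S(\bm\tau')$, while the definition of $I_S$ gives $S(\bm\tau') \le I_S(\MA)$. Taking the supremum over all admissible $\bm\sigma'$ and $K$ completes the bound.

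The main technical delicacy I anticipate is support bookkeeping: the SEO of $\bm\sigma'$ lives only on ${\rm supp}(\rho_{\bm\sigma'})$, so one must verify that the $M_{a|x}$'s restricted there still form a POVM and still realise $\MA$ faithfully rather than some proper restriction. The cleanest way to dispose of this is to fix at the outset the ambient Hilbert space of $\rho_{\bm\sigma'}$, treat $\rho_{\bm\sigma'}$ as full rank on that space, and reduce the general rank case by restriction. Beyond this caveat, the whole argument reduces to a single conceptual observation: the SEO map intertwines classical post-processings of POVMs with classical post-processings of state assemblages, and it is precisely this intertwining that promotes the incompatibility free operation $\mathcal{L}_{\rm inc}$ to a steering free operation $\mathcal{E}_{\rm steer}$ acting on the ${\rm LF_1}$-reachable set.
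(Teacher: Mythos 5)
Your proof is correct and follows essentially the same route as the paper: the core step in both is the observation that classical post-processing of the measurement assemblage commutes with the SEO/filtering structure, so that $\mathcal{L}_{\rm inc}$ descends to a steering free operation $\mathcal{E}_{\rm steer}$ with $\mathcal{E}_\omega = p(\omega)\mathcal{I}$ acting on the ${\rm LF_1}$-reachable assemblages, after which monotonicity of $S$ finishes the argument. The only packaging difference is that the paper first uses Theorem~\ref{Result:LF1=SEO-ordering} to rewrite $I_S(\MA)$ as a supremum over pairs $(\eta,U)$ of $S\bigl(\sqrt{\eta}U\MA U^\dagger\sqrt{\eta}\bigr)$ whereas you work directly with the filters $K$, and you additionally spell out the faithfulness claim (correctly), which the paper leaves as a remark in the main text.
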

\begin{proof}
It suffices to show the non-increasing property.
First, by using Theorem~\ref{Result:LF1=SEO-ordering}, we can rewrite Eq.~\eqref{Eq:steering-induced income measure} as
\begin{align}\label{Eq:Incomp_steer_measure alternative form}
&I_S(\MA) = \sup\left\{S({\bm\tau})\;\middle|\;{\bm\sigma}\succ_{\rm SEO}{\bm\tau},{\bf B}^{({\bm\sigma})} = \MA\right\}\nonumber\\
&=\sup\left\{S\left(\sqrt{\eta}U\MA U^\dagger\sqrt{\eta}\right)\,\middle|\,{\rm supp}(\eta)\subseteq{\rm supp}\left(U\id_\MA U^\dagger\right)\right\},
\end{align}
where the maximisation is taken over every state $\eta$ and unitary $U$ satisfying ${\rm supp}(\eta)\subseteq{\rm supp}\left(U\id_\MA U^\dagger\right)$.
Now, for a free operation of incompatibility as given in Eq.~\eqref{Eq: free incom Paul}, we have $\sqrt{\eta}U\mathcal{L}_{\rm{inc}}(\MA)U^\dagger\sqrt{\eta}= \mathcal{E}_{\rm steer}\left(\sqrt{\eta}U\MA U^\dagger\sqrt{\eta}\right)$ for some free operation of steering.
This is because
Eq.~\eqref{Eq: free incom Paul} can be viewed as special cases of Eq.~\eqref{Eq: free steering} with $\mathcal{E}_{\omega}=p(\omega)\mathcal{I}$, where $\mathcal{I}$ is the identity map.
Hence, we have
\begin{align}
&I_S\left[\mathcal{L}_{\rm{inc}}(\MA)\right]\nonumber\\
&=\sup\left\{S\left[\mathcal{E}_{\rm steer}\left(\sqrt{\eta}U\MA U^\dagger\sqrt{\eta}\right)\right]\;\middle|\;{\rm supp}(\eta)\subseteq{\rm supp}\left(U\id_\MA U^\dagger\right)\right\}\nonumber\\
&\le\sup\left\{S\left(\sqrt{\eta}U\MA U^\dagger\sqrt{\eta}\right)\;\middle|\;{\rm supp}(\eta)\subseteq{\rm supp}\left(U\id_\MA U^\dagger\right)\right\}\nonumber\\
&=I_S(\MA),
\end{align}
where we have used the non-increasing property of $S(\cdot)$ under $\mathcal{E}_{\rm steer}$ and Eq.~\eqref{Eq:Incomp_steer_measure alternative form}.
The proof is thus completed.
\end{proof}
When the given steering monotone $S$ is convex, the steering-induced incompatibility measure is also convex. 
More precisely, $I_{S}(\sum_i p_i\MA_i)\leq\sum_i p_i I_{S} (\MA_i)$ for every probability distribution $\{p_i\}_i$, as one can directly check by using Eq.~\eqref{Eq:Incomp_steer_measure alternative form}.
Also, we comment that, in the current setting, the free operations of incompatibility can be viewed as a subset of free operations of steering.
Finally, note that Eq.~\eqref{Eq:Incomp_steer_measure alternative form} implies that one can simply write
\begin{align}
I_S\left(\MA\right)\coloneqq\sup\left\{S\left({\bm\tau}\right)\,\middle|\,{\bm\sigma}\xrightarrow{\rm LF_1}{\bm\tau}\right\}
\end{align}
for {\em any} state assemblage ${\bm\sigma}$ satisfying ${\bf B}^{({\bm\sigma})} = \MA$.

\section{Proof of Theorem~\ref{Result:duality-noise-model}}\label{App:Proof-Result:duality-noise-model}
\begin{proof}
For every given unitary $U$ and state $\eta$ with the condition \mbox{${\rm supp}(\eta) \subseteq {\rm supp}(U\rho_{\bm\sigma}U^\dagger)$,} we can write ${\rm SR}_{\mathcal{N}_S}\left(\sqrt{\eta}U{\bf B}^{({\bm\sigma})}U^\dagger\sqrt{\eta}\right)$
as the following optimisation:
\begin{equation}
\begin{split}
\min\quad&t\\
{\rm s.t.}\quad& t\ge0, \frac{\sqrt{\eta}U{\bf B}^{({\bm\sigma})}U^\dagger\sqrt{\eta}+t{\bm\omega}}{1+t}\in{\bf LHS},\\
&{\bm\omega}\in\mathcal{N}_S\left(\sqrt{\eta}U{\bf B}^{({\bm\sigma})}U^\dagger\sqrt{\eta}\right).
\end{split}
\end{equation}
Using Eq.~\eqref{Eq:SEO-ordering-induced} to make the minimisation range smaller, we obtain the following upper bound:
\begin{equation}
\begin{split}
\min\quad&t\\
{\rm s.t.}\quad& t\ge0, \frac{\sqrt{\eta}U{\bf B}^{({\bm\sigma})}U^\dagger\sqrt{\eta}+t{\bm\omega}}{1+t}\in{\bf LHS},\\
&{\bm\omega}\in \sqrt{\eta}U\mathcal{N}_I\left({\bf B}^{({\bm\sigma})}\right)U^\dagger\sqrt{\eta},\\&
\end{split}
\end{equation}
This can be rewritten as
\begin{equation}\label{Eq:Comp-Proof-Th4-001}
\begin{split}
\min\quad&t\\
{\rm s.t.}\quad& t\ge0, \sqrt{\eta}U\left(\frac{{\bf B}^{({\bm\sigma})}+t{\bf W}}{1+t}\right)U^\dagger\sqrt{\eta}\in{\bf LHS},\\
&{\bf W}\in \mathcal{N}_I\left({\bf B}^{({\bm\sigma})}\right).
\end{split}
\end{equation}
Whenever ${\rm supp}(\eta)\subseteq{\rm supp}(U\id_{\bf M}U^\dagger)$, ${\bf M}\in{\bf JM}$ implies that \mbox{$\sqrt{\eta}U{\bf M}U^\dagger\sqrt{\eta}\in{\bf LHS}$}. 
Hence, Eq.~\eqref{Eq:Comp-Proof-Th4-001} is upper bounded by 
\begin{equation}
\begin{split}
\min\quad&t\\
{\rm s.t.}\quad& t\ge0, \frac{{\bf B}^{({\bm\sigma})}+t{\bf W}}{1+t}\in{\bf JM}, {\bf W}\in \mathcal{N}_I\left({\bf B}^{({\bm\sigma})}\right),\\&
\end{split}
\end{equation}
which is exactly ${\rm IR}_{\mathcal{N}_I}\left({\bf B}^{({\bm\sigma})}\right)$.
This means that ${\rm SR}_{\mathcal{N}_S}\left(\sqrt{\eta}U{\bf B}^{({\bm\sigma})}U^\dagger\sqrt{\eta}\right)\le{\rm IR}_{\mathcal{N}_I}\left({\bf B}^{({\bm\sigma})}\right)$ for every $\eta$ and unitary $U$ with ${\rm supp}(\eta) \subseteq {\rm supp}(U\rho_{\bm\sigma}U^\dagger)$.
Maximising over all such $\eta,U$ and using Theorem~\ref{Result:LF1=SEO-ordering}, we obtain
\begin{align}\label{Eq:LowerBound}
\sup\left\{{\rm SR}_{\mathcal{N}_S}\left({\bm\tau}\right)\,\middle|\,{\bm\sigma}\xrightarrow{\rm LF_1}{\bm\tau}\right\}
\le{\rm IR}_{\mathcal{N}_I}\left({\bf B}^{({\bm\sigma})}\right).
\end{align}
This concludes the proof.
\end{proof}

\bibliography{Ref.bib}

%apsrev4-2.bst 2019-01-14 (MD) hand-edited version of apsrev4-1.bst
%Control: key (0)
%Control: author (8) initials jnrlst
%Control: editor formatted (1) identically to author
%Control: production of article title (0) allowed
%Control: page (0) single
%Control: year (1) truncated
%Control: production of eprint (0) enabled
\begin{thebibliography}{66}%
\makeatletter
\providecommand \@ifxundefined [1]{%
 \@ifx{#1\undefined}
}%
\providecommand \@ifnum [1]{%
 \ifnum #1\expandafter \@firstoftwo
 \else \expandafter \@secondoftwo
 \fi
}%
\providecommand \@ifx [1]{%
 \ifx #1\expandafter \@firstoftwo
 \else \expandafter \@secondoftwo
 \fi
}%
\providecommand \natexlab [1]{#1}%
\providecommand \enquote  [1]{``#1''}%
\providecommand \bibnamefont  [1]{#1}%
\providecommand \bibfnamefont [1]{#1}%
\providecommand \citenamefont [1]{#1}%
\providecommand \href@noop [0]{\@secondoftwo}%
\providecommand \href [0]{\begingroup \@sanitize@url \@href}%
\providecommand \@href[1]{\@@startlink{#1}\@@href}%
\providecommand \@@href[1]{\endgroup#1\@@endlink}%
\providecommand \@sanitize@url [0]{\catcode `\\12\catcode `\$12\catcode
  `\&12\catcode `\#12\catcode `\^12\catcode `\_12\catcode `\%12\relax}%
\providecommand \@@startlink[1]{}%
\providecommand \@@endlink[0]{}%
\providecommand \url  [0]{\begingroup\@sanitize@url \@url }%
\providecommand \@url [1]{\endgroup\@href {#1}{\urlprefix }}%
\providecommand \urlprefix  [0]{URL }%
\providecommand \Eprint [0]{\href }%
\providecommand \doibase [0]{https://doi.org/}%
\providecommand \selectlanguage [0]{\@gobble}%
\providecommand \bibinfo  [0]{\@secondoftwo}%
\providecommand \bibfield  [0]{\@secondoftwo}%
\providecommand \translation [1]{[#1]}%
\providecommand \BibitemOpen [0]{}%
\providecommand \bibitemStop [0]{}%
\providecommand \bibitemNoStop [0]{.\EOS\space}%
\providecommand \EOS [0]{\spacefactor3000\relax}%
\providecommand \BibitemShut  [1]{\csname bibitem#1\endcsname}%
\let\auto@bib@innerbib\@empty
%</preamble>
\bibitem [{\citenamefont {Chitambar}\ and\ \citenamefont
  {Gour}(2019)}]{ChitambarRMP2019}%
  \BibitemOpen
  \bibfield  {author} {\bibinfo {author} {\bibfnamefont {E.}~\bibnamefont
  {Chitambar}}\ and\ \bibinfo {author} {\bibfnamefont {G.}~\bibnamefont
  {Gour}},\ }\bibfield  {title} {\bibinfo {title} {Quantum resource theories},\
  }\href {https://doi.org/10.1103/RevModPhys.91.025001} {\bibfield  {journal}
  {\bibinfo  {journal} {Rev. Mod. Phys.}\ }\textbf {\bibinfo {volume} {91}},\
  \bibinfo {pages} {025001} (\bibinfo {year} {2019})}\BibitemShut {NoStop}%
\bibitem [{\citenamefont {Bennett}\ \emph {et~al.}(1993)\citenamefont
  {Bennett}, \citenamefont {Brassard}, \citenamefont {Cr\'epeau}, \citenamefont
  {Jozsa}, \citenamefont {Peres},\ and\ \citenamefont {Wootters}}]{Bennett93}%
  \BibitemOpen
  \bibfield  {author} {\bibinfo {author} {\bibfnamefont {C.~H.}\ \bibnamefont
  {Bennett}}, \bibinfo {author} {\bibfnamefont {G.}~\bibnamefont {Brassard}},
  \bibinfo {author} {\bibfnamefont {C.}~\bibnamefont {Cr\'epeau}}, \bibinfo
  {author} {\bibfnamefont {R.}~\bibnamefont {Jozsa}}, \bibinfo {author}
  {\bibfnamefont {A.}~\bibnamefont {Peres}},\ and\ \bibinfo {author}
  {\bibfnamefont {W.~K.}\ \bibnamefont {Wootters}},\ }\bibfield  {title}
  {\bibinfo {title} {Teleporting an unknown quantum state via dual classical
  and {E}instein-{P}odolsky-{R}osen channels},\ }\href
  {https://doi.org/10.1103/PhysRevLett.70.1895} {\bibfield  {journal} {\bibinfo
   {journal} {Phys. Rev. Lett.}\ }\textbf {\bibinfo {volume} {70}},\ \bibinfo
  {pages} {1895} (\bibinfo {year} {1993})}\BibitemShut {NoStop}%
\bibitem [{\citenamefont {Bennett}\ and\ \citenamefont
  {Wiesner}(1992)}]{Bennett92}%
  \BibitemOpen
  \bibfield  {author} {\bibinfo {author} {\bibfnamefont {C.~H.}\ \bibnamefont
  {Bennett}}\ and\ \bibinfo {author} {\bibfnamefont {S.~J.}\ \bibnamefont
  {Wiesner}},\ }\bibfield  {title} {\bibinfo {title} {Communication via one-
  and two-particle operators on {E}instein-{P}odolsky-{R}osen states},\ }\href
  {https://doi.org/10.1103/PhysRevLett.69.2881} {\bibfield  {journal} {\bibinfo
   {journal} {Phys. Rev. Lett.}\ }\textbf {\bibinfo {volume} {69}},\ \bibinfo
  {pages} {2881} (\bibinfo {year} {1992})}\BibitemShut {NoStop}%
\bibitem [{\citenamefont {Wilde}(2017)}]{wilde_2017}%
  \BibitemOpen
  \bibfield  {author} {\bibinfo {author} {\bibfnamefont {M.~M.}\ \bibnamefont
  {Wilde}},\ }\href {https://doi.org/10.1017/9781316809976} {\emph {\bibinfo
  {title} {Quantum Information Theory}}},\ \bibinfo {edition} {2nd}\ ed.\
  (\bibinfo  {publisher} {Cambridge University Press},\ \bibinfo {year}
  {2017})\BibitemShut {NoStop}%
\bibitem [{\citenamefont {Bennett}\ \emph {et~al.}(1996)\citenamefont
  {Bennett}, \citenamefont {DiVincenzo}, \citenamefont {Smolin},\ and\
  \citenamefont {Wootters}}]{Bennett1996}%
  \BibitemOpen
  \bibfield  {author} {\bibinfo {author} {\bibfnamefont {C.~H.}\ \bibnamefont
  {Bennett}}, \bibinfo {author} {\bibfnamefont {D.~P.}\ \bibnamefont
  {DiVincenzo}}, \bibinfo {author} {\bibfnamefont {J.~A.}\ \bibnamefont
  {Smolin}},\ and\ \bibinfo {author} {\bibfnamefont {W.~K.}\ \bibnamefont
  {Wootters}},\ }\bibfield  {title} {\bibinfo {title} {Mixed-state entanglement
  and quantum error correction},\ }\href
  {https://doi.org/10.1103/PhysRevA.54.3824} {\bibfield  {journal} {\bibinfo
  {journal} {Phys. Rev. A}\ }\textbf {\bibinfo {volume} {54}},\ \bibinfo
  {pages} {3824} (\bibinfo {year} {1996})}\BibitemShut {NoStop}%
\bibitem [{\citenamefont {Horodecki}\ \emph {et~al.}(1999)\citenamefont
  {Horodecki}, \citenamefont {Horodecki},\ and\ \citenamefont
  {Horodecki}}]{Horodecki1999}%
  \BibitemOpen
  \bibfield  {author} {\bibinfo {author} {\bibfnamefont {M.}~\bibnamefont
  {Horodecki}}, \bibinfo {author} {\bibfnamefont {P.}~\bibnamefont
  {Horodecki}},\ and\ \bibinfo {author} {\bibfnamefont {R.}~\bibnamefont
  {Horodecki}},\ }\bibfield  {title} {\bibinfo {title} {General teleportation
  channel, singlet fraction, and quasidistillation},\ }\href
  {https://doi.org/10.1103/PhysRevA.60.1888} {\bibfield  {journal} {\bibinfo
  {journal} {Phys. Rev. A}\ }\textbf {\bibinfo {volume} {60}},\ \bibinfo
  {pages} {1888} (\bibinfo {year} {1999})}\BibitemShut {NoStop}%
\bibitem [{\citenamefont {Regula}\ and\ \citenamefont
  {Takagi}(2021{\natexlab{a}})}]{Regula2021}%
  \BibitemOpen
  \bibfield  {author} {\bibinfo {author} {\bibfnamefont {B.}~\bibnamefont
  {Regula}}\ and\ \bibinfo {author} {\bibfnamefont {R.}~\bibnamefont
  {Takagi}},\ }\bibfield  {title} {\bibinfo {title} {Fundamental limitations on
  distillation of quantum channel resources},\ }\href
  {https://doi.org/10.1038/s41467-021-24699-0} {\bibfield  {journal} {\bibinfo
  {journal} {Nat. Commun.}\ }\textbf {\bibinfo {volume} {12}},\ \bibinfo
  {pages} {4411} (\bibinfo {year} {2021}{\natexlab{a}})}\BibitemShut {NoStop}%
\bibitem [{\citenamefont {Regula}\ and\ \citenamefont
  {Takagi}(2021{\natexlab{b}})}]{Regula2021PRL}%
  \BibitemOpen
  \bibfield  {author} {\bibinfo {author} {\bibfnamefont {B.}~\bibnamefont
  {Regula}}\ and\ \bibinfo {author} {\bibfnamefont {R.}~\bibnamefont
  {Takagi}},\ }\bibfield  {title} {\bibinfo {title} {One-shot manipulation of
  dynamical quantum resources},\ }\href
  {https://doi.org/10.1103/PhysRevLett.127.060402} {\bibfield  {journal}
  {\bibinfo  {journal} {Phys. Rev. Lett.}\ }\textbf {\bibinfo {volume} {127}},\
  \bibinfo {pages} {060402} (\bibinfo {year} {2021}{\natexlab{b}})}\BibitemShut
  {NoStop}%
\bibitem [{\citenamefont {Lostaglio}(2019)}]{Lostaglio2019}%
  \BibitemOpen
  \bibfield  {author} {\bibinfo {author} {\bibfnamefont {M.}~\bibnamefont
  {Lostaglio}},\ }\bibfield  {title} {\bibinfo {title} {An introductory review
  of the resource theory approach to thermodynamics},\ }\href
  {https://doi.org/10.1088/1361-6633/ab46e5} {\bibfield  {journal} {\bibinfo
  {journal} {Rep. Prog. Phys.}\ }\textbf {\bibinfo {volume} {82}},\ \bibinfo
  {pages} {114001} (\bibinfo {year} {2019})}\BibitemShut {NoStop}%
\bibitem [{\citenamefont {Gour}\ \emph {et~al.}(2015)\citenamefont {Gour},
  \citenamefont {Müller}, \citenamefont {Narasimhachar}, \citenamefont
  {Spekkens},\ and\ \citenamefont {{Yunger Halpern}}}]{Purity-review}%
  \BibitemOpen
  \bibfield  {author} {\bibinfo {author} {\bibfnamefont {G.}~\bibnamefont
  {Gour}}, \bibinfo {author} {\bibfnamefont {M.~P.}\ \bibnamefont {Müller}},
  \bibinfo {author} {\bibfnamefont {V.}~\bibnamefont {Narasimhachar}}, \bibinfo
  {author} {\bibfnamefont {R.~W.}\ \bibnamefont {Spekkens}},\ and\ \bibinfo
  {author} {\bibfnamefont {N.}~\bibnamefont {{Yunger Halpern}}},\ }\bibfield
  {title} {\bibinfo {title} {The resource theory of informational
  nonequilibrium in thermodynamics},\ }\href
  {https://doi.org/https://doi.org/10.1016/j.physrep.2015.04.003} {\bibfield
  {journal} {\bibinfo  {journal} {Phys. Rep.}\ }\textbf {\bibinfo {volume}
  {583}},\ \bibinfo {pages} {1} (\bibinfo {year} {2015})}\BibitemShut {NoStop}%
\bibitem [{\citenamefont {Linden}\ and\ \citenamefont
  {Skrzypczyk}()}]{Paul2022}%
  \BibitemOpen
  \bibfield  {author} {\bibinfo {author} {\bibfnamefont {N.}~\bibnamefont
  {Linden}}\ and\ \bibinfo {author} {\bibfnamefont {P.}~\bibnamefont
  {Skrzypczyk}},\ }\href@noop {} {\bibinfo {title} {How to use arbitrary
  measuring devices to perform almost perfect measurements}},\ \Eprint
  {https://arxiv.org/abs/2203.02593} {arXiv:2203.02593} \BibitemShut {NoStop}%
\bibitem [{\citenamefont {Gisin}(1996)}]{Gisin1996PLA}%
  \BibitemOpen
  \bibfield  {author} {\bibinfo {author} {\bibfnamefont {N.}~\bibnamefont
  {Gisin}},\ }\bibfield  {title} {\bibinfo {title} {Hidden quantum nonlocality
  revealed by local filters},\ }\href
  {https://doi.org/https://doi.org/10.1016/S0375-9601(96)80001-6} {\bibfield
  {journal} {\bibinfo  {journal} {Phys. Lett. A}\ }\textbf {\bibinfo {volume}
  {210}},\ \bibinfo {pages} {151} (\bibinfo {year} {1996})}\BibitemShut
  {NoStop}%
\bibitem [{\citenamefont {Forster}\ \emph {et~al.}(2009)\citenamefont
  {Forster}, \citenamefont {Winkler},\ and\ \citenamefont
  {Wolf}}]{Forster2009PRL}%
  \BibitemOpen
  \bibfield  {author} {\bibinfo {author} {\bibfnamefont {M.}~\bibnamefont
  {Forster}}, \bibinfo {author} {\bibfnamefont {S.}~\bibnamefont {Winkler}},\
  and\ \bibinfo {author} {\bibfnamefont {S.}~\bibnamefont {Wolf}},\ }\bibfield
  {title} {\bibinfo {title} {Distilling nonlocality},\ }\href
  {https://doi.org/10.1103/PhysRevLett.102.120401} {\bibfield  {journal}
  {\bibinfo  {journal} {Phys. Rev. Lett.}\ }\textbf {\bibinfo {volume} {102}},\
  \bibinfo {pages} {120401} (\bibinfo {year} {2009})}\BibitemShut {NoStop}%
\bibitem [{\citenamefont {Naik}\ \emph {et~al.}(2023)\citenamefont {Naik},
  \citenamefont {Sidhardh}, \citenamefont {Sen}, \citenamefont {Roy},
  \citenamefont {Rai},\ and\ \citenamefont {Banik}}]{Naik2023PRL}%
  \BibitemOpen
  \bibfield  {author} {\bibinfo {author} {\bibfnamefont {S.~G.}\ \bibnamefont
  {Naik}}, \bibinfo {author} {\bibfnamefont {G.~L.}\ \bibnamefont {Sidhardh}},
  \bibinfo {author} {\bibfnamefont {S.}~\bibnamefont {Sen}}, \bibinfo {author}
  {\bibfnamefont {A.}~\bibnamefont {Roy}}, \bibinfo {author} {\bibfnamefont
  {A.}~\bibnamefont {Rai}},\ and\ \bibinfo {author} {\bibfnamefont
  {M.}~\bibnamefont {Banik}},\ }\bibfield  {title} {\bibinfo {title}
  {Distilling nonlocality in quantum correlations},\ }\href
  {https://doi.org/10.1103/PhysRevLett.130.220201} {\bibfield  {journal}
  {\bibinfo  {journal} {Phys. Rev. Lett.}\ }\textbf {\bibinfo {volume} {130}},\
  \bibinfo {pages} {220201} (\bibinfo {year} {2023})}\BibitemShut {NoStop}%
\bibitem [{\citenamefont {Nery}\ \emph {et~al.}(2020)\citenamefont {Nery},
  \citenamefont {Taddei}, \citenamefont {Sahium}, \citenamefont {Walborn},
  \citenamefont {Aolita},\ and\ \citenamefont {Aguilar}}]{Nery2020}%
  \BibitemOpen
  \bibfield  {author} {\bibinfo {author} {\bibfnamefont {R.~V.}\ \bibnamefont
  {Nery}}, \bibinfo {author} {\bibfnamefont {M.~M.}\ \bibnamefont {Taddei}},
  \bibinfo {author} {\bibfnamefont {P.}~\bibnamefont {Sahium}}, \bibinfo
  {author} {\bibfnamefont {S.~P.}\ \bibnamefont {Walborn}}, \bibinfo {author}
  {\bibfnamefont {L.}~\bibnamefont {Aolita}},\ and\ \bibinfo {author}
  {\bibfnamefont {G.~H.}\ \bibnamefont {Aguilar}},\ }\bibfield  {title}
  {\bibinfo {title} {Distillation of quantum steering},\ }\href
  {https://doi.org/10.1103/PhysRevLett.124.120402} {\bibfield  {journal}
  {\bibinfo  {journal} {Phys. Rev. Lett.}\ }\textbf {\bibinfo {volume} {124}},\
  \bibinfo {pages} {120402} (\bibinfo {year} {2020})}\BibitemShut {NoStop}%
\bibitem [{\citenamefont {Liu}\ \emph {et~al.}(2022)\citenamefont {Liu},
  \citenamefont {Zheng}, \citenamefont {Kang}, \citenamefont {Han},
  \citenamefont {Wang}, \citenamefont {Zhang}, \citenamefont {Su},\ and\
  \citenamefont {Peng}}]{Liu2022}%
  \BibitemOpen
  \bibfield  {author} {\bibinfo {author} {\bibfnamefont {Y.}~\bibnamefont
  {Liu}}, \bibinfo {author} {\bibfnamefont {K.}~\bibnamefont {Zheng}}, \bibinfo
  {author} {\bibfnamefont {H.}~\bibnamefont {Kang}}, \bibinfo {author}
  {\bibfnamefont {D.}~\bibnamefont {Han}}, \bibinfo {author} {\bibfnamefont
  {M.}~\bibnamefont {Wang}}, \bibinfo {author} {\bibfnamefont {L.}~\bibnamefont
  {Zhang}}, \bibinfo {author} {\bibfnamefont {X.}~\bibnamefont {Su}},\ and\
  \bibinfo {author} {\bibfnamefont {K.}~\bibnamefont {Peng}},\ }\bibfield
  {title} {\bibinfo {title} {Distillation of gaussian einstein-podolsky-rosen
  steering with noiseless linear amplification},\ }\href
  {https://doi.org/10.1038/s41534-022-00549-9} {\bibfield  {journal} {\bibinfo
  {journal} {npj Quantum Inf.}\ }\textbf {\bibinfo {volume} {8}},\ \bibinfo
  {pages} {38} (\bibinfo {year} {2022})}\BibitemShut {NoStop}%
\bibitem [{\citenamefont {Ku}\ \emph {et~al.}(2022{\natexlab{a}})\citenamefont
  {Ku}, \citenamefont {Hsieh}, \citenamefont {Chen}, \citenamefont {Chen},\
  and\ \citenamefont {Budroni}}]{Ku2022NC}%
  \BibitemOpen
  \bibfield  {author} {\bibinfo {author} {\bibfnamefont {H.-Y.}\ \bibnamefont
  {Ku}}, \bibinfo {author} {\bibfnamefont {C.-Y.}\ \bibnamefont {Hsieh}},
  \bibinfo {author} {\bibfnamefont {S.-L.}\ \bibnamefont {Chen}}, \bibinfo
  {author} {\bibfnamefont {Y.-N.}\ \bibnamefont {Chen}},\ and\ \bibinfo
  {author} {\bibfnamefont {C.}~\bibnamefont {Budroni}},\ }\bibfield  {title}
  {\bibinfo {title} {Complete classification of steerability under local
  filters and its relation with measurement incompatibility},\ }\href
  {https://doi.org/10.1038/s41467-022-32466-y} {\bibfield  {journal} {\bibinfo
  {journal} {Nat. Commun.}\ }\textbf {\bibinfo {volume} {13}},\ \bibinfo
  {pages} {4973} (\bibinfo {year} {2022}{\natexlab{a}})}\BibitemShut {NoStop}%
\bibitem [{\citenamefont {Ku}\ \emph {et~al.}()\citenamefont {Ku},
  \citenamefont {Hsieh},\ and\ \citenamefont {Budroni}}]{Ku2023}%
  \BibitemOpen
  \bibfield  {author} {\bibinfo {author} {\bibfnamefont {H.-Y.}\ \bibnamefont
  {Ku}}, \bibinfo {author} {\bibfnamefont {C.-Y.}\ \bibnamefont {Hsieh}},\ and\
  \bibinfo {author} {\bibfnamefont {C.}~\bibnamefont {Budroni}},\ }\href@noop
  {} {\bibinfo {title} {Measurement incompatibility cannot be stochastically
  distilled}},\ \Eprint {https://arxiv.org/abs/2308.02252} {arXiv:2308.02252}
  \BibitemShut {NoStop}%
\bibitem [{\citenamefont {Chen}\ \emph {et~al.}(2016)\citenamefont {Chen},
  \citenamefont {Budroni}, \citenamefont {Liang},\ and\ \citenamefont
  {Chen}}]{ShinLiang2016PRL}%
  \BibitemOpen
  \bibfield  {author} {\bibinfo {author} {\bibfnamefont {S.-L.}\ \bibnamefont
  {Chen}}, \bibinfo {author} {\bibfnamefont {C.}~\bibnamefont {Budroni}},
  \bibinfo {author} {\bibfnamefont {Y.-C.}\ \bibnamefont {Liang}},\ and\
  \bibinfo {author} {\bibfnamefont {Y.-N.}\ \bibnamefont {Chen}},\ }\bibfield
  {title} {\bibinfo {title} {Natural framework for device-independent
  quantification of quantum steerability, measurement incompatibility, and
  self-testing},\ }\href {https://doi.org/10.1103/PhysRevLett.116.240401}
  {\bibfield  {journal} {\bibinfo  {journal} {Phys. Rev. Lett.}\ }\textbf
  {\bibinfo {volume} {116}},\ \bibinfo {pages} {240401} (\bibinfo {year}
  {2016})}\BibitemShut {NoStop}%
\bibitem [{\citenamefont {Skrzypczyk}\ and\ \citenamefont
  {Cavalcanti}(2018{\natexlab{a}})}]{Skrzypczyk2018}%
  \BibitemOpen
  \bibfield  {author} {\bibinfo {author} {\bibfnamefont {P.}~\bibnamefont
  {Skrzypczyk}}\ and\ \bibinfo {author} {\bibfnamefont {D.}~\bibnamefont
  {Cavalcanti}},\ }\bibfield  {title} {\bibinfo {title} {Maximal randomness
  generation from steering inequality violations using qudits},\ }\href
  {https://doi.org/10.1103/PhysRevLett.120.260401} {\bibfield  {journal}
  {\bibinfo  {journal} {Phys. Rev. Lett.}\ }\textbf {\bibinfo {volume} {120}},\
  \bibinfo {pages} {260401} (\bibinfo {year} {2018}{\natexlab{a}})}\BibitemShut
  {NoStop}%
\bibitem [{\citenamefont {Chen}\ \emph {et~al.}(2021)\citenamefont {Chen},
  \citenamefont {Ku}, \citenamefont {Zhou}, \citenamefont {Tura},\ and\
  \citenamefont {Chen}}]{Chen2021robustselftestingof}%
  \BibitemOpen
  \bibfield  {author} {\bibinfo {author} {\bibfnamefont {S.-L.}\ \bibnamefont
  {Chen}}, \bibinfo {author} {\bibfnamefont {H.-Y.}\ \bibnamefont {Ku}},
  \bibinfo {author} {\bibfnamefont {W.}~\bibnamefont {Zhou}}, \bibinfo {author}
  {\bibfnamefont {J.}~\bibnamefont {Tura}},\ and\ \bibinfo {author}
  {\bibfnamefont {Y.-N.}\ \bibnamefont {Chen}},\ }\bibfield  {title} {\bibinfo
  {title} {Robust self-testing of steerable quantum assemblages and its
  applications on device-independent quantum certification},\ }\href
  {https://doi.org/10.22331/q-2021-09-28-552} {\bibfield  {journal} {\bibinfo
  {journal} {{Quantum}}\ }\textbf {\bibinfo {volume} {5}},\ \bibinfo {pages}
  {552} (\bibinfo {year} {2021})}\BibitemShut {NoStop}%
\bibitem [{\citenamefont {Branciard}\ \emph {et~al.}(2012)\citenamefont
  {Branciard}, \citenamefont {Cavalcanti}, \citenamefont {Walborn},
  \citenamefont {Scarani},\ and\ \citenamefont {Wiseman}}]{Branciard2012}%
  \BibitemOpen
  \bibfield  {author} {\bibinfo {author} {\bibfnamefont {C.}~\bibnamefont
  {Branciard}}, \bibinfo {author} {\bibfnamefont {E.~G.}\ \bibnamefont
  {Cavalcanti}}, \bibinfo {author} {\bibfnamefont {S.~P.}\ \bibnamefont
  {Walborn}}, \bibinfo {author} {\bibfnamefont {V.}~\bibnamefont {Scarani}},\
  and\ \bibinfo {author} {\bibfnamefont {H.~M.}\ \bibnamefont {Wiseman}},\
  }\bibfield  {title} {\bibinfo {title} {One-sided device-independent quantum
  key distribution: Security, feasibility, and the connection with steering},\
  }\href {https://link.aps.org/doi/10.1103/PhysRevA.85.010301} {\bibfield
  {journal} {\bibinfo  {journal} {Phys. Rev. A}\ }\textbf {\bibinfo {volume}
  {85}},\ \bibinfo {pages} {010301} (\bibinfo {year} {2012})}\BibitemShut
  {NoStop}%
\bibitem [{\citenamefont {Piani}\ and\ \citenamefont
  {Watrous}(2015)}]{Piani2015}%
  \BibitemOpen
  \bibfield  {author} {\bibinfo {author} {\bibfnamefont {M.}~\bibnamefont
  {Piani}}\ and\ \bibinfo {author} {\bibfnamefont {J.}~\bibnamefont
  {Watrous}},\ }\bibfield  {title} {\bibinfo {title} {Necessary and sufficient
  quantum information characterization of {E}instein-{P}odolsky-{R}osen
  steering},\ }\href {https://doi.org/10.1103/PhysRevLett.114.060404}
  {\bibfield  {journal} {\bibinfo  {journal} {Phys. Rev. Lett.}\ }\textbf
  {\bibinfo {volume} {114}},\ \bibinfo {pages} {060404} (\bibinfo {year}
  {2015})}\BibitemShut {NoStop}%
\bibitem [{\citenamefont {Zhao}\ \emph {et~al.}(2020)\citenamefont {Zhao},
  \citenamefont {Ku}, \citenamefont {Chen}, \citenamefont {Chen}, \citenamefont
  {Nori}, \citenamefont {Xiang}, \citenamefont {Li}, \citenamefont {Guo},\ and\
  \citenamefont {Chen}}]{Zhao2020}%
  \BibitemOpen
  \bibfield  {author} {\bibinfo {author} {\bibfnamefont {Y.-Y.}\ \bibnamefont
  {Zhao}}, \bibinfo {author} {\bibfnamefont {H.-Y.}\ \bibnamefont {Ku}},
  \bibinfo {author} {\bibfnamefont {S.-L.}\ \bibnamefont {Chen}}, \bibinfo
  {author} {\bibfnamefont {H.-B.}\ \bibnamefont {Chen}}, \bibinfo {author}
  {\bibfnamefont {F.}~\bibnamefont {Nori}}, \bibinfo {author} {\bibfnamefont
  {G.-Y.}\ \bibnamefont {Xiang}}, \bibinfo {author} {\bibfnamefont {C.-F.}\
  \bibnamefont {Li}}, \bibinfo {author} {\bibfnamefont {G.-C.}\ \bibnamefont
  {Guo}},\ and\ \bibinfo {author} {\bibfnamefont {Y.-N.}\ \bibnamefont
  {Chen}},\ }\bibfield  {title} {\bibinfo {title} {Experimental demonstration
  of measurement-device-independent measure of quantum steering},\ }\href
  {https://doi.org/10.1038/s41534-020-00307-9} {\bibfield  {journal} {\bibinfo
  {journal} {npj Quantum Inf.}\ }\textbf {\bibinfo {volume} {6}},\ \bibinfo
  {pages} {77} (\bibinfo {year} {2020})}\BibitemShut {NoStop}%
\bibitem [{\citenamefont {Ku}\ \emph {et~al.}(2022{\natexlab{b}})\citenamefont
  {Ku}, \citenamefont {Kadlec}, \citenamefont {\ifmmode~\check{C}\else
  \v{C}\fi{}ernoch}, \citenamefont {Quintino}, \citenamefont {Zhou},
  \citenamefont {Lemr}, \citenamefont {Lambert}, \citenamefont {Miranowicz},
  \citenamefont {Chen}, \citenamefont {Nori},\ and\ \citenamefont
  {Chen}}]{Ku2022PRXQ}%
  \BibitemOpen
  \bibfield  {author} {\bibinfo {author} {\bibfnamefont {H.-Y.}\ \bibnamefont
  {Ku}}, \bibinfo {author} {\bibfnamefont {J.}~\bibnamefont {Kadlec}}, \bibinfo
  {author} {\bibfnamefont {A.}~\bibnamefont {\ifmmode~\check{C}\else
  \v{C}\fi{}ernoch}}, \bibinfo {author} {\bibfnamefont {M.~T.}\ \bibnamefont
  {Quintino}}, \bibinfo {author} {\bibfnamefont {W.}~\bibnamefont {Zhou}},
  \bibinfo {author} {\bibfnamefont {K.}~\bibnamefont {Lemr}}, \bibinfo {author}
  {\bibfnamefont {N.}~\bibnamefont {Lambert}}, \bibinfo {author} {\bibfnamefont
  {A.}~\bibnamefont {Miranowicz}}, \bibinfo {author} {\bibfnamefont {S.-L.}\
  \bibnamefont {Chen}}, \bibinfo {author} {\bibfnamefont {F.}~\bibnamefont
  {Nori}},\ and\ \bibinfo {author} {\bibfnamefont {Y.-N.}\ \bibnamefont
  {Chen}},\ }\bibfield  {title} {\bibinfo {title} {Quantifying quantumness of
  channels without entanglement},\ }\href
  {https://doi.org/10.1103/PRXQuantum.3.020338} {\bibfield  {journal} {\bibinfo
   {journal} {PRX Quantum}\ }\textbf {\bibinfo {volume} {3}},\ \bibinfo {pages}
  {020338} (\bibinfo {year} {2022}{\natexlab{b}})}\BibitemShut {NoStop}%
\bibitem [{\citenamefont {G\"uhne}\ \emph {et~al.}(2023)\citenamefont
  {G\"uhne}, \citenamefont {Haapasalo}, \citenamefont {Kraft}, \citenamefont
  {Pellonp\"a\"a},\ and\ \citenamefont {Uola}}]{Otfried2021Rev}%
  \BibitemOpen
  \bibfield  {author} {\bibinfo {author} {\bibfnamefont {O.}~\bibnamefont
  {G\"uhne}}, \bibinfo {author} {\bibfnamefont {E.}~\bibnamefont {Haapasalo}},
  \bibinfo {author} {\bibfnamefont {T.}~\bibnamefont {Kraft}}, \bibinfo
  {author} {\bibfnamefont {J.-P.}\ \bibnamefont {Pellonp\"a\"a}},\ and\
  \bibinfo {author} {\bibfnamefont {R.}~\bibnamefont {Uola}},\ }\bibfield
  {title} {\bibinfo {title} {Colloquium: Incompatible measurements in quantum
  information science},\ }\href {https://doi.org/10.1103/RevModPhys.95.011003}
  {\bibfield  {journal} {\bibinfo  {journal} {Rev. Mod. Phys.}\ }\textbf
  {\bibinfo {volume} {95}},\ \bibinfo {pages} {011003} (\bibinfo {year}
  {2023})}\BibitemShut {NoStop}%
\bibitem [{\citenamefont {Wiseman}\ \emph {et~al.}(2007)\citenamefont
  {Wiseman}, \citenamefont {Jones},\ and\ \citenamefont
  {Doherty}}]{Wiseman2007PRL}%
  \BibitemOpen
  \bibfield  {author} {\bibinfo {author} {\bibfnamefont {H.~M.}\ \bibnamefont
  {Wiseman}}, \bibinfo {author} {\bibfnamefont {S.~J.}\ \bibnamefont {Jones}},\
  and\ \bibinfo {author} {\bibfnamefont {A.~C.}\ \bibnamefont {Doherty}},\
  }\bibfield  {title} {\bibinfo {title} {Steering, entanglement, nonlocality,
  and the {E}instein-{P}odolsky-{R}osen paradox},\ }\href
  {https://doi.org/10.1103/PhysRevLett.98.140402} {\bibfield  {journal}
  {\bibinfo  {journal} {Phys. Rev. Lett.}\ }\textbf {\bibinfo {volume} {98}},\
  \bibinfo {pages} {140402} (\bibinfo {year} {2007})}\BibitemShut {NoStop}%
\bibitem [{\citenamefont {Cavalcanti}\ and\ \citenamefont
  {Skrzypczyk}(2016{\natexlab{a}})}]{Cavalcanti2016}%
  \BibitemOpen
  \bibfield  {author} {\bibinfo {author} {\bibfnamefont {D.}~\bibnamefont
  {Cavalcanti}}\ and\ \bibinfo {author} {\bibfnamefont {P.}~\bibnamefont
  {Skrzypczyk}},\ }\bibfield  {title} {\bibinfo {title} {Quantum steering: a
  review with focus on semidefinite programming},\ }\href
  {https://doi.org/10.1088/1361-6633/80/2/024001} {\bibfield  {journal}
  {\bibinfo  {journal} {Rep. Prog. Phys.}\ }\textbf {\bibinfo {volume} {80}},\
  \bibinfo {pages} {024001} (\bibinfo {year} {2016}{\natexlab{a}})}\BibitemShut
  {NoStop}%
\bibitem [{\citenamefont {Uola}\ \emph {et~al.}(2020)\citenamefont {Uola},
  \citenamefont {Costa}, \citenamefont {Nguyen},\ and\ \citenamefont
  {G\"uhne}}]{UolaRMP2020}%
  \BibitemOpen
  \bibfield  {author} {\bibinfo {author} {\bibfnamefont {R.}~\bibnamefont
  {Uola}}, \bibinfo {author} {\bibfnamefont {A.~C.~S.}\ \bibnamefont {Costa}},
  \bibinfo {author} {\bibfnamefont {H.~C.}\ \bibnamefont {Nguyen}},\ and\
  \bibinfo {author} {\bibfnamefont {O.}~\bibnamefont {G\"uhne}},\ }\bibfield
  {title} {\bibinfo {title} {Quantum steering},\ }\href
  {https://doi.org/10.1103/RevModPhys.92.015001} {\bibfield  {journal}
  {\bibinfo  {journal} {Rev. Mod. Phys.}\ }\textbf {\bibinfo {volume} {92}},\
  \bibinfo {pages} {015001} (\bibinfo {year} {2020})}\BibitemShut {NoStop}%
\bibitem [{\citenamefont {Xiang}\ \emph {et~al.}(2022)\citenamefont {Xiang},
  \citenamefont {Cheng}, \citenamefont {Gong}, \citenamefont {Ficek},\ and\
  \citenamefont {He}}]{XiangPRXQ2022}%
  \BibitemOpen
  \bibfield  {author} {\bibinfo {author} {\bibfnamefont {Y.}~\bibnamefont
  {Xiang}}, \bibinfo {author} {\bibfnamefont {S.}~\bibnamefont {Cheng}},
  \bibinfo {author} {\bibfnamefont {Q.}~\bibnamefont {Gong}}, \bibinfo {author}
  {\bibfnamefont {Z.}~\bibnamefont {Ficek}},\ and\ \bibinfo {author}
  {\bibfnamefont {Q.}~\bibnamefont {He}},\ }\bibfield  {title} {\bibinfo
  {title} {Quantum steering: Practical challenges and future directions},\
  }\href {https://doi.org/10.1103/PRXQuantum.3.030102} {\bibfield  {journal}
  {\bibinfo  {journal} {PRX Quantum}\ }\textbf {\bibinfo {volume} {3}},\
  \bibinfo {pages} {030102} (\bibinfo {year} {2022})}\BibitemShut {NoStop}%
\bibitem [{\citenamefont {Nielsen}\ and\ \citenamefont
  {Chuang}(2010)}]{QIC-book}%
  \BibitemOpen
  \bibfield  {author} {\bibinfo {author} {\bibfnamefont {M.~A.}\ \bibnamefont
  {Nielsen}}\ and\ \bibinfo {author} {\bibfnamefont {I.~L.}\ \bibnamefont
  {Chuang}},\ }\href@noop {} {\emph {\bibinfo {title} {Quantum Computation and
  Quantum Information}}},\ \bibinfo {edition} {10th}\ ed.\ (\bibinfo
  {publisher} {Cambridge University Press},\ \bibinfo {year}
  {2010})\BibitemShut {NoStop}%
\bibitem [{\citenamefont {Busch}\ \emph {et~al.}(2014)\citenamefont {Busch},
  \citenamefont {Lahti},\ and\ \citenamefont {Werner}}]{Busch2014RMP}%
  \BibitemOpen
  \bibfield  {author} {\bibinfo {author} {\bibfnamefont {P.}~\bibnamefont
  {Busch}}, \bibinfo {author} {\bibfnamefont {P.}~\bibnamefont {Lahti}},\ and\
  \bibinfo {author} {\bibfnamefont {R.~F.}\ \bibnamefont {Werner}},\ }\bibfield
   {title} {\bibinfo {title} {Colloquium: Quantum root-mean-square error and
  measurement uncertainty relations},\ }\href
  {https://doi.org/10.1103/RevModPhys.86.1261} {\bibfield  {journal} {\bibinfo
  {journal} {Rev. Mod. Phys.}\ }\textbf {\bibinfo {volume} {86}},\ \bibinfo
  {pages} {1261} (\bibinfo {year} {2014})}\BibitemShut {NoStop}%
\bibitem [{\citenamefont {Wolf}\ \emph {et~al.}(2009)\citenamefont {Wolf},
  \citenamefont {Perez-Garcia},\ and\ \citenamefont {Fernandez}}]{Wolf2009PRL}%
  \BibitemOpen
  \bibfield  {author} {\bibinfo {author} {\bibfnamefont {M.~M.}\ \bibnamefont
  {Wolf}}, \bibinfo {author} {\bibfnamefont {D.}~\bibnamefont {Perez-Garcia}},\
  and\ \bibinfo {author} {\bibfnamefont {C.}~\bibnamefont {Fernandez}},\
  }\bibfield  {title} {\bibinfo {title} {Measurements incompatible in quantum
  theory cannot be measured jointly in any other no-signaling theory},\ }\href
  {https://doi.org/10.1103/PhysRevLett.103.230402} {\bibfield  {journal}
  {\bibinfo  {journal} {Phys. Rev. Lett.}\ }\textbf {\bibinfo {volume} {103}},\
  \bibinfo {pages} {230402} (\bibinfo {year} {2009})}\BibitemShut {NoStop}%
\bibitem [{\citenamefont {Quintino}\ \emph {et~al.}(2019)\citenamefont
  {Quintino}, \citenamefont {Budroni}, \citenamefont {Woodhead}, \citenamefont
  {Cabello},\ and\ \citenamefont {Cavalcanti}}]{QuintinoPRL2019}%
  \BibitemOpen
  \bibfield  {author} {\bibinfo {author} {\bibfnamefont {M.~T.}\ \bibnamefont
  {Quintino}}, \bibinfo {author} {\bibfnamefont {C.}~\bibnamefont {Budroni}},
  \bibinfo {author} {\bibfnamefont {E.}~\bibnamefont {Woodhead}}, \bibinfo
  {author} {\bibfnamefont {A.}~\bibnamefont {Cabello}},\ and\ \bibinfo {author}
  {\bibfnamefont {D.}~\bibnamefont {Cavalcanti}},\ }\bibfield  {title}
  {\bibinfo {title} {Device-independent tests of structures of measurement
  incompatibility},\ }\href {https://doi.org/10.1103/PhysRevLett.123.180401}
  {\bibfield  {journal} {\bibinfo  {journal} {Phys. Rev. Lett.}\ }\textbf
  {\bibinfo {volume} {123}},\ \bibinfo {pages} {180401} (\bibinfo {year}
  {2019})}\BibitemShut {NoStop}%
\bibitem [{\citenamefont {Zhao}\ \emph {et~al.}(2023)\citenamefont {Zhao},
  \citenamefont {Zhang}, \citenamefont {Cheng}, \citenamefont {Li},
  \citenamefont {Guo}, \citenamefont {Liu}, \citenamefont {Ku}, \citenamefont
  {Chen}, \citenamefont {Wen}, \citenamefont {Huang}, \citenamefont {Xiang},
  \citenamefont {Li},\ and\ \citenamefont {Guo}}]{Zhao2023Optica}%
  \BibitemOpen
  \bibfield  {author} {\bibinfo {author} {\bibfnamefont {Y.-Y.}\ \bibnamefont
  {Zhao}}, \bibinfo {author} {\bibfnamefont {C.}~\bibnamefont {Zhang}},
  \bibinfo {author} {\bibfnamefont {S.}~\bibnamefont {Cheng}}, \bibinfo
  {author} {\bibfnamefont {X.}~\bibnamefont {Li}}, \bibinfo {author}
  {\bibfnamefont {Y.}~\bibnamefont {Guo}}, \bibinfo {author} {\bibfnamefont
  {B.-H.}\ \bibnamefont {Liu}}, \bibinfo {author} {\bibfnamefont {H.-Y.}\
  \bibnamefont {Ku}}, \bibinfo {author} {\bibfnamefont {S.-L.}\ \bibnamefont
  {Chen}}, \bibinfo {author} {\bibfnamefont {Q.}~\bibnamefont {Wen}}, \bibinfo
  {author} {\bibfnamefont {Y.-F.}\ \bibnamefont {Huang}}, \bibinfo {author}
  {\bibfnamefont {G.-Y.}\ \bibnamefont {Xiang}}, \bibinfo {author}
  {\bibfnamefont {C.-F.}\ \bibnamefont {Li}},\ and\ \bibinfo {author}
  {\bibfnamefont {G.-C.}\ \bibnamefont {Guo}},\ }\bibfield  {title} {\bibinfo
  {title} {Device-independent verification of {E}instein--{P}odolsky--{R}osen
  steering},\ }\href {https://doi.org/10.1364/OPTICA.456382} {\bibfield
  {journal} {\bibinfo  {journal} {Optica}\ }\textbf {\bibinfo {volume} {10}},\
  \bibinfo {pages} {66} (\bibinfo {year} {2023})}\BibitemShut {NoStop}%
\bibitem [{\citenamefont {Budroni}\ \emph {et~al.}(2022)\citenamefont
  {Budroni}, \citenamefont {Cabello}, \citenamefont {G\"uhne}, \citenamefont
  {Kleinmann},\ and\ \citenamefont {Larsson}}]{BudroniRMP2022}%
  \BibitemOpen
  \bibfield  {author} {\bibinfo {author} {\bibfnamefont {C.}~\bibnamefont
  {Budroni}}, \bibinfo {author} {\bibfnamefont {A.}~\bibnamefont {Cabello}},
  \bibinfo {author} {\bibfnamefont {O.}~\bibnamefont {G\"uhne}}, \bibinfo
  {author} {\bibfnamefont {M.}~\bibnamefont {Kleinmann}},\ and\ \bibinfo
  {author} {\bibfnamefont {J.-A.}\ \bibnamefont {Larsson}},\ }\bibfield
  {title} {\bibinfo {title} {Kochen-specker contextuality},\ }\href
  {https://doi.org/10.1103/RevModPhys.94.045007} {\bibfield  {journal}
  {\bibinfo  {journal} {Rev. Mod. Phys.}\ }\textbf {\bibinfo {volume} {94}},\
  \bibinfo {pages} {045007} (\bibinfo {year} {2022})}\BibitemShut {NoStop}%
\bibitem [{\citenamefont {Uola}\ \emph {et~al.}(2015)\citenamefont {Uola},
  \citenamefont {Budroni}, \citenamefont {G\"uhne},\ and\ \citenamefont
  {Pellonp\"a\"a}}]{Uola2015PRL}%
  \BibitemOpen
  \bibfield  {author} {\bibinfo {author} {\bibfnamefont {R.}~\bibnamefont
  {Uola}}, \bibinfo {author} {\bibfnamefont {C.}~\bibnamefont {Budroni}},
  \bibinfo {author} {\bibfnamefont {O.}~\bibnamefont {G\"uhne}},\ and\ \bibinfo
  {author} {\bibfnamefont {J.-P.}\ \bibnamefont {Pellonp\"a\"a}},\ }\bibfield
  {title} {\bibinfo {title} {One-to-one mapping between steering and joint
  measurability problems},\ }\href
  {https://doi.org/10.1103/PhysRevLett.115.230402} {\bibfield  {journal}
  {\bibinfo  {journal} {Phys. Rev. Lett.}\ }\textbf {\bibinfo {volume} {115}},\
  \bibinfo {pages} {230402} (\bibinfo {year} {2015})}\BibitemShut {NoStop}%
\bibitem [{\citenamefont {Kiukas}\ \emph {et~al.}(2017)\citenamefont {Kiukas},
  \citenamefont {Budroni}, \citenamefont {Uola},\ and\ \citenamefont
  {Pellonp\"a\"a}}]{Kiukas2017PRA}%
  \BibitemOpen
  \bibfield  {author} {\bibinfo {author} {\bibfnamefont {J.}~\bibnamefont
  {Kiukas}}, \bibinfo {author} {\bibfnamefont {C.}~\bibnamefont {Budroni}},
  \bibinfo {author} {\bibfnamefont {R.}~\bibnamefont {Uola}},\ and\ \bibinfo
  {author} {\bibfnamefont {J.-P.}\ \bibnamefont {Pellonp\"a\"a}},\ }\bibfield
  {title} {\bibinfo {title} {Continuous-variable steering and incompatibility
  via state-channel duality},\ }\href
  {https://doi.org/10.1103/PhysRevA.96.042331} {\bibfield  {journal} {\bibinfo
  {journal} {Phys. Rev. A}\ }\textbf {\bibinfo {volume} {96}},\ \bibinfo
  {pages} {042331} (\bibinfo {year} {2017})}\BibitemShut {NoStop}%
\bibitem [{\citenamefont {Datta}(2009)}]{Datta2009}%
  \BibitemOpen
  \bibfield  {author} {\bibinfo {author} {\bibfnamefont {N.}~\bibnamefont
  {Datta}},\ }\bibfield  {title} {\bibinfo {title} {Min- and max-relative
  entropies and a new entanglement monotone},\ }\href
  {https://doi.org/10.1109/TIT.2009.2018325} {\bibfield  {journal} {\bibinfo
  {journal} {IEEE Trans. Inf. Theory}\ }\textbf {\bibinfo {volume} {55}},\
  \bibinfo {pages} {2816} (\bibinfo {year} {2009})}\BibitemShut {NoStop}%
\bibitem [{\citenamefont {Gallego}\ and\ \citenamefont
  {Aolita}(2015)}]{Gallego2015PRX}%
  \BibitemOpen
  \bibfield  {author} {\bibinfo {author} {\bibfnamefont {R.}~\bibnamefont
  {Gallego}}\ and\ \bibinfo {author} {\bibfnamefont {L.}~\bibnamefont
  {Aolita}},\ }\bibfield  {title} {\bibinfo {title} {Resource theory of
  steering},\ }\href {https://doi.org/10.1103/PhysRevX.5.041008} {\bibfield
  {journal} {\bibinfo  {journal} {Phys. Rev. X}\ }\textbf {\bibinfo {volume}
  {5}},\ \bibinfo {pages} {041008} (\bibinfo {year} {2015})}\BibitemShut
  {NoStop}%
\bibitem [{\citenamefont {Skrzypczyk}\ \emph {et~al.}(2019)\citenamefont
  {Skrzypczyk}, \citenamefont {\ifmmode \check{S}\else
  \v{S}\fi{}upi\ifmmode~\acute{c}\else \'{c}\fi{}},\ and\ \citenamefont
  {Cavalcanti}}]{Paul2019PRL}%
  \BibitemOpen
  \bibfield  {author} {\bibinfo {author} {\bibfnamefont {P.}~\bibnamefont
  {Skrzypczyk}}, \bibinfo {author} {\bibfnamefont {I.}~\bibnamefont {\ifmmode
  \check{S}\else \v{S}\fi{}upi\ifmmode~\acute{c}\else \'{c}\fi{}}},\ and\
  \bibinfo {author} {\bibfnamefont {D.}~\bibnamefont {Cavalcanti}},\ }\bibfield
   {title} {\bibinfo {title} {All sets of incompatible measurements give an
  advantage in quantum state discrimination},\ }\href
  {https://doi.org/10.1103/PhysRevLett.122.130403} {\bibfield  {journal}
  {\bibinfo  {journal} {Phys. Rev. Lett.}\ }\textbf {\bibinfo {volume} {122}},\
  \bibinfo {pages} {130403} (\bibinfo {year} {2019})}\BibitemShut {NoStop}%
\bibitem [{\citenamefont {Yadin}\ \emph {et~al.}(2021)\citenamefont {Yadin},
  \citenamefont {Fadel},\ and\ \citenamefont {Gessner}}]{Yadin2021NC}%
  \BibitemOpen
  \bibfield  {author} {\bibinfo {author} {\bibfnamefont {B.}~\bibnamefont
  {Yadin}}, \bibinfo {author} {\bibfnamefont {M.}~\bibnamefont {Fadel}},\ and\
  \bibinfo {author} {\bibfnamefont {M.}~\bibnamefont {Gessner}},\ }\bibfield
  {title} {\bibinfo {title} {Metrological complementarity reveals the
  {E}instein-{P}odolsky-{R}osen paradox},\ }\href
  {https://doi.org/10.1038/s41467-021-22353-3} {\bibfield  {journal} {\bibinfo
  {journal} {Nat. Commun.}\ }\textbf {\bibinfo {volume} {12}},\ \bibinfo
  {pages} {2410} (\bibinfo {year} {2021})}\BibitemShut {NoStop}%
\bibitem [{\citenamefont {Designolle}\ \emph {et~al.}(2019)\citenamefont
  {Designolle}, \citenamefont {Farkas},\ and\ \citenamefont
  {Kaniewski}}]{Designolle2019}%
  \BibitemOpen
  \bibfield  {author} {\bibinfo {author} {\bibfnamefont {S.}~\bibnamefont
  {Designolle}}, \bibinfo {author} {\bibfnamefont {M.}~\bibnamefont {Farkas}},\
  and\ \bibinfo {author} {\bibfnamefont {J.}~\bibnamefont {Kaniewski}},\
  }\bibfield  {title} {\bibinfo {title} {Incompatibility robustness of quantum
  measurements: a unified framework},\ }\href
  {https://doi.org/10.1088/1367-2630/ab5020} {\bibfield  {journal} {\bibinfo
  {journal} {New J. Phys.}\ }\textbf {\bibinfo {volume} {21}},\ \bibinfo
  {pages} {113053} (\bibinfo {year} {2019})}\BibitemShut {NoStop}%
\bibitem [{\citenamefont {Takagi}\ and\ \citenamefont
  {Regula}(2019)}]{Takagi2019}%
  \BibitemOpen
  \bibfield  {author} {\bibinfo {author} {\bibfnamefont {R.}~\bibnamefont
  {Takagi}}\ and\ \bibinfo {author} {\bibfnamefont {B.}~\bibnamefont
  {Regula}},\ }\bibfield  {title} {\bibinfo {title} {General resource theories
  in quantum mechanics and beyond: Operational characterization via
  discrimination tasks},\ }\href {https://doi.org/10.1103/PhysRevX.9.031053}
  {\bibfield  {journal} {\bibinfo  {journal} {Phys. Rev. X}\ }\textbf {\bibinfo
  {volume} {9}},\ \bibinfo {pages} {031053} (\bibinfo {year}
  {2019})}\BibitemShut {NoStop}%
\bibitem [{\citenamefont {Hsieh}\ \emph
  {et~al.}(2016{\natexlab{a}})\citenamefont {Hsieh}, \citenamefont {Liang},\
  and\ \citenamefont {Lee}}]{Hsieh2016PRA}%
  \BibitemOpen
  \bibfield  {author} {\bibinfo {author} {\bibfnamefont {C.-Y.}\ \bibnamefont
  {Hsieh}}, \bibinfo {author} {\bibfnamefont {Y.-C.}\ \bibnamefont {Liang}},\
  and\ \bibinfo {author} {\bibfnamefont {R.-K.}\ \bibnamefont {Lee}},\
  }\bibfield  {title} {\bibinfo {title} {Quantum steerability:
  Characterization, quantification, superactivation, and unbounded
  amplification},\ }\href {https://doi.org/10.1103/PhysRevA.94.062120}
  {\bibfield  {journal} {\bibinfo  {journal} {Phys. Rev. A}\ }\textbf {\bibinfo
  {volume} {94}},\ \bibinfo {pages} {062120} (\bibinfo {year}
  {2016}{\natexlab{a}})}\BibitemShut {NoStop}%
\bibitem [{\citenamefont {Ku}\ \emph {et~al.}(2018)\citenamefont {Ku},
  \citenamefont {Chen}, \citenamefont {Budroni}, \citenamefont {Miranowicz},
  \citenamefont {Chen},\ and\ \citenamefont {Nori}}]{Ku2018PRA}%
  \BibitemOpen
  \bibfield  {author} {\bibinfo {author} {\bibfnamefont {H.-Y.}\ \bibnamefont
  {Ku}}, \bibinfo {author} {\bibfnamefont {S.-L.}\ \bibnamefont {Chen}},
  \bibinfo {author} {\bibfnamefont {C.}~\bibnamefont {Budroni}}, \bibinfo
  {author} {\bibfnamefont {A.}~\bibnamefont {Miranowicz}}, \bibinfo {author}
  {\bibfnamefont {Y.-N.}\ \bibnamefont {Chen}},\ and\ \bibinfo {author}
  {\bibfnamefont {F.}~\bibnamefont {Nori}},\ }\bibfield  {title} {\bibinfo
  {title} {Einstein-podolsky-rosen steering: Its geometric quantification and
  witness},\ }\href {https://doi.org/10.1103/PhysRevA.97.022338} {\bibfield
  {journal} {\bibinfo  {journal} {Phys. Rev. A}\ }\textbf {\bibinfo {volume}
  {97}},\ \bibinfo {pages} {022338} (\bibinfo {year} {2018})}\BibitemShut
  {NoStop}%
\bibitem [{\citenamefont {Tendick}\ \emph {et~al.}(2023)\citenamefont
  {Tendick}, \citenamefont {Kliesch}, \citenamefont {Kampermann},\ and\
  \citenamefont {Bru{\ss{}}}}]{Tendick2023Quantum}%
  \BibitemOpen
  \bibfield  {author} {\bibinfo {author} {\bibfnamefont {L.}~\bibnamefont
  {Tendick}}, \bibinfo {author} {\bibfnamefont {M.}~\bibnamefont {Kliesch}},
  \bibinfo {author} {\bibfnamefont {H.}~\bibnamefont {Kampermann}},\ and\
  \bibinfo {author} {\bibfnamefont {D.}~\bibnamefont {Bru{\ss{}}}},\ }\bibfield
   {title} {\bibinfo {title} {Distance-based resource quantification for sets
  of quantum measurements},\ }\href
  {https://doi.org/10.22331/q-2023-05-15-1003} {\bibfield  {journal} {\bibinfo
  {journal} {{Quantum}}\ }\textbf {\bibinfo {volume} {7}},\ \bibinfo {pages}
  {1003} (\bibinfo {year} {2023})}\BibitemShut {NoStop}%
\bibitem [{\citenamefont {Haapasalo}(2015)}]{Haapasalo2015}%
  \BibitemOpen
  \bibfield  {author} {\bibinfo {author} {\bibfnamefont {E.}~\bibnamefont
  {Haapasalo}},\ }\bibfield  {title} {\bibinfo {title} {Robustness of
  incompatibility for quantum devices},\ }\href
  {https://doi.org/10.1088/1751-8113/48/25/255303} {\bibfield  {journal}
  {\bibinfo  {journal} {J. Phys. A: Math. Theor.}\ }\textbf {\bibinfo {volume}
  {48}},\ \bibinfo {pages} {255303} (\bibinfo {year} {2015})}\BibitemShut
  {NoStop}%
\bibitem [{\citenamefont {Buscemi}\ \emph {et~al.}(2020)\citenamefont
  {Buscemi}, \citenamefont {Chitambar},\ and\ \citenamefont
  {Zhou}}]{Buscemi2020PRL}%
  \BibitemOpen
  \bibfield  {author} {\bibinfo {author} {\bibfnamefont {F.}~\bibnamefont
  {Buscemi}}, \bibinfo {author} {\bibfnamefont {E.}~\bibnamefont {Chitambar}},\
  and\ \bibinfo {author} {\bibfnamefont {W.}~\bibnamefont {Zhou}},\ }\bibfield
  {title} {\bibinfo {title} {Complete resource theory of quantum
  incompatibility as quantum programmability},\ }\href
  {https://doi.org/10.1103/PhysRevLett.124.120401} {\bibfield  {journal}
  {\bibinfo  {journal} {Phys. Rev. Lett.}\ }\textbf {\bibinfo {volume} {124}},\
  \bibinfo {pages} {120401} (\bibinfo {year} {2020})}\BibitemShut {NoStop}%
\bibitem [{\citenamefont {Cavalcanti}\ and\ \citenamefont
  {Skrzypczyk}(2016{\natexlab{b}})}]{Cavalcanti2016PRA}%
  \BibitemOpen
  \bibfield  {author} {\bibinfo {author} {\bibfnamefont {D.}~\bibnamefont
  {Cavalcanti}}\ and\ \bibinfo {author} {\bibfnamefont {P.}~\bibnamefont
  {Skrzypczyk}},\ }\bibfield  {title} {\bibinfo {title} {Quantitative relations
  between measurement incompatibility, quantum steering, and nonlocality},\
  }\href {https://doi.org/10.1103/PhysRevA.93.052112} {\bibfield  {journal}
  {\bibinfo  {journal} {Phys. Rev. A}\ }\textbf {\bibinfo {volume} {93}},\
  \bibinfo {pages} {052112} (\bibinfo {year} {2016}{\natexlab{b}})}\BibitemShut
  {NoStop}%
\bibitem [{\citenamefont {Chen}\ \emph {et~al.}(2018)\citenamefont {Chen},
  \citenamefont {Budroni}, \citenamefont {Liang},\ and\ \citenamefont
  {Chen}}]{Chen2018PRA}%
  \BibitemOpen
  \bibfield  {author} {\bibinfo {author} {\bibfnamefont {S.-L.}\ \bibnamefont
  {Chen}}, \bibinfo {author} {\bibfnamefont {C.}~\bibnamefont {Budroni}},
  \bibinfo {author} {\bibfnamefont {Y.-C.}\ \bibnamefont {Liang}},\ and\
  \bibinfo {author} {\bibfnamefont {Y.-N.}\ \bibnamefont {Chen}},\ }\bibfield
  {title} {\bibinfo {title} {Exploring the framework of assemblage moment
  matrices and its applications in device-independent characterizations},\
  }\href {https://doi.org/10.1103/PhysRevA.98.042127} {\bibfield  {journal}
  {\bibinfo  {journal} {Phys. Rev. A}\ }\textbf {\bibinfo {volume} {98}},\
  \bibinfo {pages} {042127} (\bibinfo {year} {2018})}\BibitemShut {NoStop}%
\bibitem [{\citenamefont {Hsieh}\ \emph {et~al.}(2022)\citenamefont {Hsieh},
  \citenamefont {Lostaglio},\ and\ \citenamefont {Ac\'{\i}n}}]{Hsieh2022PRR}%
  \BibitemOpen
  \bibfield  {author} {\bibinfo {author} {\bibfnamefont {C.-Y.}\ \bibnamefont
  {Hsieh}}, \bibinfo {author} {\bibfnamefont {M.}~\bibnamefont {Lostaglio}},\
  and\ \bibinfo {author} {\bibfnamefont {A.}~\bibnamefont {Ac\'{\i}n}},\
  }\bibfield  {title} {\bibinfo {title} {Quantum channel marginal problem},\
  }\href {https://doi.org/10.1103/PhysRevResearch.4.013249} {\bibfield
  {journal} {\bibinfo  {journal} {Phys. Rev. Res.}\ }\textbf {\bibinfo {volume}
  {4}},\ \bibinfo {pages} {013249} (\bibinfo {year} {2022})}\BibitemShut
  {NoStop}%
\bibitem [{\citenamefont {Hsieh}\ \emph {et~al.}()\citenamefont {Hsieh},
  \citenamefont {Tabia}, \citenamefont {Yin},\ and\ \citenamefont
  {Liang}}]{Hsieh2023-2}%
  \BibitemOpen
  \bibfield  {author} {\bibinfo {author} {\bibfnamefont {C.-Y.}\ \bibnamefont
  {Hsieh}}, \bibinfo {author} {\bibfnamefont {G.~N.~M.}\ \bibnamefont {Tabia}},
  \bibinfo {author} {\bibfnamefont {Y.-C.}\ \bibnamefont {Yin}},\ and\ \bibinfo
  {author} {\bibfnamefont {Y.-C.}\ \bibnamefont {Liang}},\ }\href@noop {}
  {\bibinfo {title} {Resoruce marginal problems}},\ \Eprint
  {https://arxiv.org/abs/2202.03523} {arXiv:2202.03523} \BibitemShut {NoStop}%
\bibitem [{\citenamefont {Skrzypczyk}\ and\ \citenamefont
  {Cavalcanti}(2023)}]{SDP-textbook}%
  \BibitemOpen
  \bibfield  {author} {\bibinfo {author} {\bibfnamefont {P.}~\bibnamefont
  {Skrzypczyk}}\ and\ \bibinfo {author} {\bibfnamefont {D.}~\bibnamefont
  {Cavalcanti}},\ }\href {https://doi.org/10.1088/978-0-7503-3343-6} {\emph
  {\bibinfo {title} {Semidefinite Programming in Quantum Information
  Science}}},\ 2053-2563\ (\bibinfo  {publisher} {IOP Publishing},\ \bibinfo
  {year} {2023})\BibitemShut {NoStop}%
\bibitem [{\citenamefont {Beyer}\ \emph {et~al.}(2019)\citenamefont {Beyer},
  \citenamefont {Luoma},\ and\ \citenamefont {Strunz}}]{BeyerPRL2019}%
  \BibitemOpen
  \bibfield  {author} {\bibinfo {author} {\bibfnamefont {K.}~\bibnamefont
  {Beyer}}, \bibinfo {author} {\bibfnamefont {K.}~\bibnamefont {Luoma}},\ and\
  \bibinfo {author} {\bibfnamefont {W.~T.}\ \bibnamefont {Strunz}},\ }\bibfield
   {title} {\bibinfo {title} {Steering heat engines: A truly quantum maxwell
  demon},\ }\href {https://doi.org/10.1103/PhysRevLett.123.250606} {\bibfield
  {journal} {\bibinfo  {journal} {Phys. Rev. Lett.}\ }\textbf {\bibinfo
  {volume} {123}},\ \bibinfo {pages} {250606} (\bibinfo {year}
  {2019})}\BibitemShut {NoStop}%
\bibitem [{\citenamefont {Hsieh}(2020)}]{Hsieh2020}%
  \BibitemOpen
  \bibfield  {author} {\bibinfo {author} {\bibfnamefont {C.-Y.}\ \bibnamefont
  {Hsieh}},\ }\bibfield  {title} {\bibinfo {title} {Resource preservability},\
  }\href {https://doi.org/10.22331/q-2020-03-19-244} {\bibfield  {journal}
  {\bibinfo  {journal} {{Quantum}}\ }\textbf {\bibinfo {volume} {4}},\ \bibinfo
  {pages} {244} (\bibinfo {year} {2020})}\BibitemShut {NoStop}%
\bibitem [{\citenamefont {Hsieh}(2021)}]{Hsieh2021PRXQ}%
  \BibitemOpen
  \bibfield  {author} {\bibinfo {author} {\bibfnamefont {C.-Y.}\ \bibnamefont
  {Hsieh}},\ }\bibfield  {title} {\bibinfo {title} {Communication, dynamical
  resource theory, and thermodynamics},\ }\href
  {https://doi.org/10.1103/PRXQuantum.2.020318} {\bibfield  {journal} {\bibinfo
   {journal} {PRX Quantum}\ }\textbf {\bibinfo {volume} {2}},\ \bibinfo {pages}
  {020318} (\bibinfo {year} {2021})}\BibitemShut {NoStop}%
\bibitem [{\citenamefont {Hsieh}()}]{Hsieh2022}%
  \BibitemOpen
  \bibfield  {author} {\bibinfo {author} {\bibfnamefont {C.-Y.}\ \bibnamefont
  {Hsieh}},\ }\href@noop {} {\bibinfo {title} {Thermodynamic criterion of
  transmitting classical information}},\ \Eprint
  {https://arxiv.org/abs/2201.12110} {arXiv:2201.12110} \BibitemShut {NoStop}%
\bibitem [{\citenamefont {Stratton}\ \emph {et~al.}()\citenamefont {Stratton},
  \citenamefont {Hsieh},\ and\ \citenamefont {Skrzypczyk}}]{Stratton2023}%
  \BibitemOpen
  \bibfield  {author} {\bibinfo {author} {\bibfnamefont {B.}~\bibnamefont
  {Stratton}}, \bibinfo {author} {\bibfnamefont {C.-Y.}\ \bibnamefont
  {Hsieh}},\ and\ \bibinfo {author} {\bibfnamefont {P.}~\bibnamefont
  {Skrzypczyk}},\ }\href@noop {} {\bibinfo {title} {The dynamical resource
  theory of informational non-equilibrium}},\ \Eprint
  {https://arxiv.org/abs/2306.16848} {arXiv:2306.16848} \BibitemShut {NoStop}%
\bibitem [{\citenamefont {Costa}\ and\ \citenamefont
  {Angelo}(2016)}]{CostaPRA2016}%
  \BibitemOpen
  \bibfield  {author} {\bibinfo {author} {\bibfnamefont {A.~C.~S.}\
  \bibnamefont {Costa}}\ and\ \bibinfo {author} {\bibfnamefont {R.~M.}\
  \bibnamefont {Angelo}},\ }\bibfield  {title} {\bibinfo {title}
  {Quantification of einstein-podolsky-rosen steering for two-qubit states},\
  }\href {https://doi.org/10.1103/PhysRevA.93.020103} {\bibfield  {journal}
  {\bibinfo  {journal} {Phys. Rev. A}\ }\textbf {\bibinfo {volume} {93}},\
  \bibinfo {pages} {020103} (\bibinfo {year} {2016})}\BibitemShut {NoStop}%
\bibitem [{\citenamefont {Skrzypczyk}\ and\ \citenamefont
  {Cavalcanti}(2018{\natexlab{b}})}]{SkrzypczykPRL2018}%
  \BibitemOpen
  \bibfield  {author} {\bibinfo {author} {\bibfnamefont {P.}~\bibnamefont
  {Skrzypczyk}}\ and\ \bibinfo {author} {\bibfnamefont {D.}~\bibnamefont
  {Cavalcanti}},\ }\bibfield  {title} {\bibinfo {title} {Maximal randomness
  generation from steering inequality violations using qudits},\ }\href
  {https://doi.org/10.1103/PhysRevLett.120.260401} {\bibfield  {journal}
  {\bibinfo  {journal} {Phys. Rev. Lett.}\ }\textbf {\bibinfo {volume} {120}},\
  \bibinfo {pages} {260401} (\bibinfo {year} {2018}{\natexlab{b}})}\BibitemShut
  {NoStop}%
\bibitem [{\citenamefont {Ji}\ \emph {et~al.}(2022)\citenamefont {Ji},
  \citenamefont {Chai}, \citenamefont {Wang}, \citenamefont {Guo},
  \citenamefont {Rong}, \citenamefont {Shi}, \citenamefont {Ren}, \citenamefont
  {Wang},\ and\ \citenamefont {Du}}]{JiPRL2022}%
  \BibitemOpen
  \bibfield  {author} {\bibinfo {author} {\bibfnamefont {W.}~\bibnamefont
  {Ji}}, \bibinfo {author} {\bibfnamefont {Z.}~\bibnamefont {Chai}}, \bibinfo
  {author} {\bibfnamefont {M.}~\bibnamefont {Wang}}, \bibinfo {author}
  {\bibfnamefont {Y.}~\bibnamefont {Guo}}, \bibinfo {author} {\bibfnamefont
  {X.}~\bibnamefont {Rong}}, \bibinfo {author} {\bibfnamefont {F.}~\bibnamefont
  {Shi}}, \bibinfo {author} {\bibfnamefont {C.}~\bibnamefont {Ren}}, \bibinfo
  {author} {\bibfnamefont {Y.}~\bibnamefont {Wang}},\ and\ \bibinfo {author}
  {\bibfnamefont {J.}~\bibnamefont {Du}},\ }\bibfield  {title} {\bibinfo
  {title} {Spin quantum heat engine quantified by quantum steering},\ }\href
  {https://doi.org/10.1103/PhysRevLett.128.090602} {\bibfield  {journal}
  {\bibinfo  {journal} {Phys. Rev. Lett.}\ }\textbf {\bibinfo {volume} {128}},\
  \bibinfo {pages} {090602} (\bibinfo {year} {2022})}\BibitemShut {NoStop}%
\bibitem [{\citenamefont {Lopetegui}\ \emph {et~al.}(2022)\citenamefont
  {Lopetegui}, \citenamefont {Gessner}, \citenamefont {Fadel}, \citenamefont
  {Treps},\ and\ \citenamefont {Walschaers}}]{LopeteguiPRXQ2022}%
  \BibitemOpen
  \bibfield  {author} {\bibinfo {author} {\bibfnamefont {C.~E.}\ \bibnamefont
  {Lopetegui}}, \bibinfo {author} {\bibfnamefont {M.}~\bibnamefont {Gessner}},
  \bibinfo {author} {\bibfnamefont {M.}~\bibnamefont {Fadel}}, \bibinfo
  {author} {\bibfnamefont {N.}~\bibnamefont {Treps}},\ and\ \bibinfo {author}
  {\bibfnamefont {M.}~\bibnamefont {Walschaers}},\ }\bibfield  {title}
  {\bibinfo {title} {Homodyne detection of non-gaussian quantum steering},\
  }\href {https://doi.org/10.1103/PRXQuantum.3.030347} {\bibfield  {journal}
  {\bibinfo  {journal} {PRX Quantum}\ }\textbf {\bibinfo {volume} {3}},\
  \bibinfo {pages} {030347} (\bibinfo {year} {2022})}\BibitemShut {NoStop}%
\bibitem [{\citenamefont {Hsieh}\ \emph
  {et~al.}(2016{\natexlab{b}})\citenamefont {Hsieh}, \citenamefont {Liang},\
  and\ \citenamefont {Lee}}]{Hsieh2016}%
  \BibitemOpen
  \bibfield  {author} {\bibinfo {author} {\bibfnamefont {C.-Y.}\ \bibnamefont
  {Hsieh}}, \bibinfo {author} {\bibfnamefont {Y.-C.}\ \bibnamefont {Liang}},\
  and\ \bibinfo {author} {\bibfnamefont {R.-K.}\ \bibnamefont {Lee}},\
  }\bibfield  {title} {\bibinfo {title} {Quantum steerability:
  Characterization, quantification, superactivation, and unbounded
  amplification},\ }\href {https://doi.org/10.1103/PhysRevA.94.062120}
  {\bibfield  {journal} {\bibinfo  {journal} {Phys. Rev. A}\ }\textbf {\bibinfo
  {volume} {94}},\ \bibinfo {pages} {062120} (\bibinfo {year}
  {2016}{\natexlab{b}})}\BibitemShut {NoStop}%
\bibitem [{\citenamefont {Quintino}\ \emph {et~al.}(2016)\citenamefont
  {Quintino}, \citenamefont {Brunner},\ and\ \citenamefont
  {Huber}}]{Quintino2016}%
  \BibitemOpen
  \bibfield  {author} {\bibinfo {author} {\bibfnamefont {M.~T.}\ \bibnamefont
  {Quintino}}, \bibinfo {author} {\bibfnamefont {N.}~\bibnamefont {Brunner}},\
  and\ \bibinfo {author} {\bibfnamefont {M.}~\bibnamefont {Huber}},\ }\bibfield
   {title} {\bibinfo {title} {Superactivation of quantum steering},\ }\href
  {https://doi.org/10.1103/PhysRevA.94.062123} {\bibfield  {journal} {\bibinfo
  {journal} {Phys. Rev. A}\ }\textbf {\bibinfo {volume} {94}},\ \bibinfo
  {pages} {062123} (\bibinfo {year} {2016})}\BibitemShut {NoStop}%
\bibitem [{\citenamefont {Ku}\ \emph {et~al.}(2023)\citenamefont {Ku},
  \citenamefont {Lee}, \citenamefont {Lai}, \citenamefont {Lin},\ and\
  \citenamefont {Chen}}]{KuPRA2023}%
  \BibitemOpen
  \bibfield  {author} {\bibinfo {author} {\bibfnamefont {H.-Y.}\ \bibnamefont
  {Ku}}, \bibinfo {author} {\bibfnamefont {K.-Y.}\ \bibnamefont {Lee}},
  \bibinfo {author} {\bibfnamefont {P.-R.}\ \bibnamefont {Lai}}, \bibinfo
  {author} {\bibfnamefont {J.-D.}\ \bibnamefont {Lin}},\ and\ \bibinfo {author}
  {\bibfnamefont {Y.-N.}\ \bibnamefont {Chen}},\ }\bibfield  {title} {\bibinfo
  {title} {Coherent activation of a steerability-breaking channel},\ }\href
  {https://doi.org/10.1103/PhysRevA.107.042415} {\bibfield  {journal} {\bibinfo
   {journal} {Phys. Rev. A}\ }\textbf {\bibinfo {volume} {107}},\ \bibinfo
  {pages} {042415} (\bibinfo {year} {2023})}\BibitemShut {NoStop}%
\end{thebibliography}%

\end{document}